\providecommand{\bigsqcap}{%
  \mathop{%
    \mathpalette\@updown\bigsqcup
  }%
}
\newcommand*{\@updown}[2]{%
  \rotatebox[origin=c]{180}{$\m@th#1#2$}%
}
\algrenewcommand\algorithmicindent{0.6em}
\newcommand*\circled[1]{\tikz[baseline=(char.base)]{
    \node [shape=circle,draw=black,inner sep=1pt,fill=white,
           line width=0.7pt,text=black,minimum size=3.3mm] (char) {\footnotesize\textsf{#1}};
}}
\newtheorem{theorem}{Theorem}
\newtheorem{definition}{Definition}
\newtheorem{property}{Property}
\newtheorem{question}{Question}
\newcommand{\omitsp}[1]{}
\newif\ifhidechanges
\newif\ifhideremovals
\newif\ifhidecomments
\DeclareDocumentCommand \rutvik { o m } {%
    \stepcounter{commentcounter}%
    \ifhidecomments
        \ignorespaces
    \else \IfNoValueTF{#1}{%
        \textsf{\textcolor{Cyan}{\footnotesize[\textbf{Rutvik:}\;#2]}}%
    }{%
        \textsf{\textcolor{lightgray}{\footnotesize[\textbf{Rutvik:}\;#2]}}%
    }%
    \fi
}
\DeclareDocumentCommand \chris { o m } {%
    \stepcounter{commentcounter}%
    \ifhidecomments
        \ignorespaces
    \else \IfNoValueTF{#1}{%
        \textsf{\textcolor{Red}{\footnotesize[\textbf{Chris:}\;#2]}}%
    }{%
        \textsf{\textcolor{lightgray}{\footnotesize[\textbf{Chris:}\;#2]}}%
    }%
    \fi
}
\DeclareDocumentCommand \adam { o m } {%
    \stepcounter{commentcounter}%
    \ifhidecomments
        \ignorespaces
    \else \IfNoValueTF{#1}{%
        \textsf{\textcolor{Green}{\footnotesize[\textbf{Adam:}\;#2]}}%
    }{%
        \textsf{\textcolor{lightgray}{\footnotesize[\textbf{Adam:}\;#2]}}%
    }\fi
}
\DeclareDocumentCommand \neil { o m } {%
    \stepcounter{commentcounter}%
    \ifhidecomments
        \ignorespaces
    \else \IfNoValueTF{#1}{%
        \textsf{\textcolor{Magenta}{\footnotesize[\textbf{Neil:}\;#2]}}%
    }{%
        \textsf{\textcolor{lightgray}{\footnotesize[\textbf{Neil:}\;#2]}}%
    }%
    \fi
}
\DeclareDocumentCommand \alan { o m } {%
    \stepcounter{commentcounter}%
    \ifhidecomments
        \ignorespaces
    \else \IfNoValueTF{#1}{%
        \textsf{\textcolor{Orange}{\footnotesize[\textbf{Alan:}\;#2]}}%
    }{%
        \textsf{\textcolor{lightgray}{\footnotesize[\textbf{Alan:}\;#2]}}%
    }%
    \fi
}
\definecolor{ClearSky}{RGB}{173, 237, 255}
\definecolor{Mist}{RGB}{232, 250, 255}
\definecolor{Cereal}{RGB}{255, 234, 150}
\definecolor{Wheat}{RGB}{255, 249, 223}
\definecolor{OceanDepth}{RGB}{26, 31, 161}
\definecolor{Abyss}{RGB}{15, 19, 96}
\definecolor{Spearmint}{RGB}{149, 251, 190}
\definecolor{Lichen}{RGB}{232, 253, 240}
\definecolor{Eraser}{RGB}{255, 195, 254}
\definecolor{RoseOfSharon}{RGB}{255, 233, 255}
\definecolor{HotPink}{RGB}{247, 0, 103}
\definecolor{HotBlue}{RGB}{13, 101, 255}
\definecolor{HotGreen}{RGB}{19, 180, 10}
\definecolor{HotOrange}{RGB}{255, 87, 51}
\definecolor{LightGray}{gray}{0.65}
\definecolor{MediumGray}{gray}{0.45}
\newcommand{\toolnameplain}{Declassiflow\xspace}
\newcommand{\toolname}{\textsc{\toolnameplain}\xspace}
\newcommand{\decl}{\text{\scriptsize Decl}}
\newcommand{\programraw}{P}
\DeclareDocumentCommand \program { o } {%
    \IfNoValueTF{#1}{%
        \programraw
    }{%
        \programraw_{#1}
    }
}
\newcommand{\policyraw}{\Pi}
\DeclareDocumentCommand \policy { o } {%
    \IfNoValueTF{#1}{%
        \policyraw
    }{%
        \policyraw_{#1}
    }
}
\DeclareDocumentCommand \varset { o } {%
    \IfNoValueTF{#1}{%
        V
    }{%
        V_{#1}
    }
}
\newcommand{\allvars}{\mathbb{V}}
\newcommand{\varlatt}{2^\allvars}
\newcommand{\allblocks}{\mathbb{B}}
\newcommand{\blockraw}{B}
\DeclareDocumentCommand \block { o o } {%
    \IfNoValueTF{#1}{%
        \blockraw
    }{%
        \IfNoValueTF{#2}{%
            \blockraw_{#1}
        }{%
            \blockraw_{#1}^{(#2)}
        }
    }
}
\newcommand{\blocksetraw}{\mathbf{\blockraw}}
\DeclareDocumentCommand \blockset { o } {%
    \IfNoValueTF{#1}{%
        \blocksetraw
    }{%
        \blocksetraw(#1)
    }
}
\newcommand{\regionraw}{R}
\DeclareDocumentCommand \region { o } {%
    \IfNoValueTF{#1}{%
        \regionraw
    }{%
        \regionraw_{#1}
    }
}
\newcommand{\alledges}{\mathbb{E}}
\newcommand{\edgeraw}{e}
\DeclareDocumentCommand \edge { o o } {%
    \IfNoValueTF{#1}{%
        \edgeraw
    }{%
        \IfNoValueTF{#2}{%
            \edgeraw_{#1}
        }{%
            \edgeraw_{#1}^{#2}
        }
    }
}
\DeclareDocumentCommand \inedge { o } {%
    \IfNoValueTF{#1}{%
        \edge[][\textup{in}]
    }{%
        \edge[#1][\textup{in}]
    }
}
\DeclareDocumentCommand \outedge { o } {%
    \IfNoValueTF{#1}{%
        \edge[][\textup{out}]
    }{%
        \edge[#1][\textup{out}]
    }
}
\DeclareDocumentCommand \inst { o } {%
    \IfNoValueTF {#1} {%
        I
    }{%
        I_{#1}
    }%
}
\DeclareDocumentCommand \codeinst { o } {%
    \IfNoValueTF {#1} {%
        \texttt{I}
    }{%
        \texttt{I}$_\text{#1}$
    }%
}
\newcommand{\edgesetraw}{E}
\newcommand{\inedgesetraw}{\edgesetraw_\textup{in}}
\DeclareDocumentCommand \inedgeset { o o } {%
    \IfNoValueTF {#1} {%
        \IfNoValueTF {#2} {%
            \inedgesetraw(\block)
        }{}
    }{%
        \IfNoValueTF {#2} {%
            \inedgesetraw(\block_{#1})
        }{%
            \inedgesetraw({#1})
        }%
    }%
}
\newcommand{\outedgesetraw}{\edgesetraw_\textup{out}}
\DeclareDocumentCommand \outedgeset { o o } {%
    \IfNoValueTF {#1} {%
        \IfNoValueTF {#2} {%
            \outedgesetraw(\block)
        }{}
    }{%
        \IfNoValueTF {#2} {%
            \outedgesetraw(\block_{#1})
        }{%
            \outedgesetraw({#1})
        }%
    }%
}
\newcommand{\dfvsymbol}{{\widehat{\knownssymbol}_\alledges}}
\DeclareDocumentCommand \alldfvs { o } {%
    \IfNoValueTF {#1} {%
        \dfvsymbol
    }{%
        \dfvsymbol_{#1}
    }%
}
\DeclareDocumentCommand \dfv { o o } {%
    \IfNoValueTF {#1} {%
        \IfNoValueTF {#2} {%
            {\alldfvs(\edge)}
        }{}
    }{%
        \IfNoValueTF {#2} {%
            {\alldfvs(\edge_{#1})}
        }{%
            {\alldfvs(#1)}
        }%
    }%
}
\DeclareDocumentCommand \trace { o } {%
    \IfNoValueTF{#1}{%
        t
    }{%
        t_{#1}
    }
}
\DeclareDocumentCommand \traceset { o } {%
    \IfNoValueTF{#1}{%
        T
    }{%
        T_{#1}
    }
}
\newcommand{\alltraces}{\mathbb{T}}
\newcommand{\allpaths}{\mathbb{P}}
\DeclareDocumentCommand \path { o } {%
    \IfNoValueTF{#1}{%
        p
    }{%
        p_{#1}
    }
}
\newcommand{\knownssymbol}{K}
\newcommand{\alledgeknowns}{{\knownssymbol_{\alledges}}}
\DeclareDocumentCommand \edgeknowns { o o } {%
    \IfNoValueTF {#1} {%
        \IfNoValueTF {#2} {%
            \alledgeknowns(\edge)
        }{}
    }{%
        \IfNoValueTF {#2} {%
            \alledgeknowns(\edge[#1])
        }{%
            \alledgeknowns({#1}_{#2})
        }%
    }%
}
\newcommand{\phiinsts}[1][]{\text{phi}\qty(\block[#1])}
\newcommand{\instoutput}[1][]{\text{out}(\inst[#1])}
\newcommand{\instinputs}[1][]{\text{in}(\inst[#1])}
\newcommand{\entrynode}{\textsc{Entry}}
\newcommand{\exitnode}{\textsc{Exit}}
\newcommand{\define}[1]{\emph{#1}}
\newcommand{\dom}{\,\text{dom}\,}
\newcommand{\pdom}{\,\text{pdom}\,}
\newcommand{\inv}{^{-1}}
\algnewcommand{\LineComment}[1]{\State \(\triangleright\) #1}
\let\emptyset\varnothing
\DeclareDocumentCommand \supernode { o } {%
    \IfNoValueTF{#1}{%
        \vb{\block}
    }{%
        \vb{\block}^{(#1)}
    }
}
\newcommand{\ith}{$i$-th\xspace}
\newcommand{\pink}[1]{\text{\color{HotPink}#1}}
\newcommand{\blue}[1]{\text{\color{HotBlue}#1}}
\newcommand{\pinktt}[1]{\texttt{\color{HotPink}#1}}
\newcommand{\bluett}[1]{\texttt{\color{HotBlue}#1}}
\newcommand{\greentt}[1]{\texttt{\color{HotGreen}#1}}
\newcommand{\pinkm}[1]{{\color{HotPink}#1}}
\newcommand{\writtenphi}{\textup{\straightphi}}
\newcommand{\nonphi}{non-\writtenphi\xspace}
\newcommand{\Nonphi}{Non-\writtenphi\xspace}
\newcommand{\phifunction}{\writtenphi-function\xspace}
\newcommand{\phifunctions}{\writtenphi-functions\xspace}
\newcommand{\inductivephi}{inductive-\writtenphi\xspace}
\newcommand{\inductivephis}{inductive-\writtenphi's\xspace}
\DeclareDocumentCommand \fwdtransfraw { o } {%
    \IfNoValueTF {#1} {%
            \varoverrightarrow{f\,\,}
        }{%
            \varoverrightarrow{f\,\,}_{\!\!\!#1}
        }%
}
\DeclareDocumentCommand \fwdtransf { o o o } {%
    \IfNoValueTF{#1}{%
        \IfNoValueTF{#2}{%
            \IfNoValueTF{#3}{%
                \fwdtransfraw[i\alpha]
            }{}
        }{}
    }{%
        \IfNoValueTF{#2}{%
            \IfNoValueTF{#3}{%
                \fwdtransfraw[i\alpha]\qty(#1)
            }{}
        }{
            \IfNoValueTF{#3}{%
                \fwdtransfraw[#1 #2]\qty(\dfv)
            }{
                \fwdtransfraw[#1 #2]\qty(#3)
            }
        }
    }
}
\DeclareDocumentCommand \bwdtransfraw { o } {%
    \IfNoValueTF {#1} {%
        \varoverleftarrow{f}
    }{%
        \varoverleftarrow{f}_{\!\!#1}
    }%
}
\DeclareDocumentCommand \bwdtransf { o o o } {%
    \IfNoValueTF{#1}{%
        \IfNoValueTF{#2}{%
            \IfNoValueTF{#3}{%
                \bwdtransfraw[j\beta]
            }{}
        }{}
    }{%
        \IfNoValueTF{#2}{%
            \IfNoValueTF{#3}{%
                \bwdtransfraw[j\beta]\qty(#1)
            }{}
        }{
            \IfNoValueTF{#3}{%
                \bwdtransfraw[#1 #2]\qty(\dfv)
            }{
                \bwdtransfraw[#1 #2]\qty(#3)
            }
        }
    }
}
\newcommand{\transmitsraw}{\textsc{Tx}}
\DeclareDocumentCommand \transmits { o } {%
    \IfNoValueTF{#1}{%
        \transmitsraw_i
    }{%
        \transmitsraw_{#1}
    }
}
\DeclareDocumentCommand \transmitvars { o } {%
    \transmitsraw\qty(\allvars)
}
\DeclareDocumentCommand \transmitblocks {} {%
    \transmitsraw(\allblocks)
}
\DeclareDocumentCommand \solvs { o o } {%
    \IfNoValueTF{#1}{%
        \IfNoValueTF{#2}{%
            \textsc{Solv}_{i}\qty(C)
        }{}
    }{%
        \IfNoValueTF{#2}{%
            \textsc{Solv}_{i}\qty(#1)
        }{%
            \textsc{Solv}_{#1}\qty(#2)
        }
    }
}
\DeclareDocumentCommand \fwdsolvs { o o } {%
    \IfNoValueTF{#1}{%
        \IfNoValueTF{#2}{%
            \textsc{FSolv}\qty(\inst, C)
        }{}
    }{%
        \IfNoValueTF{#2}{%
            \textsc{FSolv}\qty(\inst, #1)
        }{%
            \textsc{FSolv}\qty(#1, #2)
        }
    }
}
\newcommand{\bwdsolvsraw}{{\textsc{BSolv}}}
\DeclareDocumentCommand \bwdsolvs { o o } {%
    \IfNoValueTF{#1}{%
        \IfNoValueTF{#2}{%
            \bwdsolvsraw\qty(\inst, C)
        }{}
    }{%
        \IfNoValueTF{#2}{%
            \bwdsolvsraw\qty(\inst, #1)
        }{%
            \bwdsolvsraw\qty(#1, #2)
        }
    }
}
\newcommand{\phisolvsraw}{{\textsc{PhiSolv}}}
\newcommand{\phisolvsfull}[3]{\phisolvsraw_{#1 #2}\qty(#3)}
\DeclareDocumentCommand \phisolvs { o o o } {%
    \IfNoValueTF{#1}{%
        \IfNoValueTF{#2}{%
            \IfNoValueTF{#3}{%
                \phisolvsfull{i}{\beta}{\dfv}
            }{}
        }{}
    }{%
        \IfNoValueTF{#2}{%
            \IfNoValueTF{#3}{%
                \phisolvsfull{i}{\beta}{#1}
            }{}
        }{
            \IfNoValueTF{#3}{%
                \phisolvsfull{#1}{#2}{\dfv}
            }{
                \phisolvsfull{#1}{#2}{#3}
            }
        }
    }
}
\newcommand{\phifsolvsfull}[3]{\textsc{PhiFSolv}_{#1 #2}\qty(#3)}
\DeclareDocumentCommand \phifsolvs { o o o } {%
    \IfNoValueTF{#1}{%
        \IfNoValueTF{#2}{%
            \IfNoValueTF{#3}{%
                \phifsolvsfull{i}{\alpha}{\dfv}
            }{}
        }{}
    }{%
        \IfNoValueTF{#2}{%
            \IfNoValueTF{#3}{%
                \phifsolvsfull{i}{\alpha}{#1}
            }{}
        }{
            \IfNoValueTF{#3}{%
                \phifsolvsfull{#1}{#2}{\dfv}
            }{
                \phifsolvsfull{#1}{#2}{#3}
            }
        }
    }
}
\newcommand{\phibsolvsfull}[3]{\textsc{PhiBSolv}_{#1 #2}\qty(#3)}
\DeclareDocumentCommand \phibsolvs { o o o } {%
    \IfNoValueTF{#1}{%
        \IfNoValueTF{#2}{%
            \IfNoValueTF{#3}{%
                \phibsolvsfull{j}{\beta}{\dfv}
            }{}
        }{}
    }{%
        \IfNoValueTF{#2}{%
            \IfNoValueTF{#3}{%
                \phibsolvsfull{j}{\beta}{#1}
            }{}
        }{
            \IfNoValueTF{#3}{%
                \phibsolvsfull{#1}{#2}{\dfv}
            }{
                \phibsolvsfull{#1}{#2}{#3}
            }
        }
    }
}
\DeclareDocumentCommand \defns { o } {%
    \IfNoValueTF{#1}{%
        \textsc{Defn}_i
    }{%
        \textsc{Defn}_{#1}
    }
}
\DeclareDocumentCommand \diredge { o o } {%
    \IfNoValueTF{#1}{%
        \IfNoValueTF{#2}{%
            i {\shortrightarrow} j
        }{}
    }{%
        \IfNoValueTF{#2}{%
            i {\shortrightarrow} #1
        }{%
            #1 {\shortrightarrow} #2
        }
    }
}
\DeclareDocumentCommand \preds { o } {%
    \IfNoValueTF{#1}{%
        \text{pred}\qty(\block[i])
    }{%
        \text{pred}\qty(#1)
    }
}
\DeclareDocumentCommand \succs { o } {%
    \IfNoValueTF{#1}{%
        \text{succ}\qty(\block[i])
    }{%
        \text{succ}\qty(#1)
    }
}
\DeclareDocumentCommand \transmitinsts { o } {%
    \IfNoValueTF{#1}{%
        \text{transmits}\qty(\block)
    }{%
        \text{transmits}\qty(\block[#1])
    }
}
\DeclareDocumentCommand \phiinsts { o } {%
    \IfNoValueTF{#1}{%
        \text{phi}\qty(\block)
    }{%
        \text{phi}\qty(\block[#1])
    }
}
\DeclareDocumentCommand \attached { o } {%
    \IfNoValueTF{#1}{%
        \text{att}\qty(\edge)
    }{%
        \text{att}\qty(\edge[#1])
    }
}
\DeclareDocumentCommand \codevar { m o o } {%
    \IfNoValueTF{#2}{%
        \texttt{#1}%
    }{%
        \IfNoValueTF{#3}{%
            \texttt{#1$_\text{#2}$}%
        }{%
            \texttt{#1$_\text{#2}^\text{(#3)}$}%
        }%
    }%
}
\DeclareDocumentCommand \codevarprime { m o o } {%
    \IfNoValueTF{#2}{%
        \texttt{$\text{#1}'$}%
    }{%
        \IfNoValueTF{#3}{%
            \texttt{$\text{#1}'_\text{#2}$}%
        }{%
            \texttt{$\text{#1}'_\text{#2}^\text{(#3)}$}%
        }%
    }%
}
\DeclareDocumentCommand \var { m o o } {%
    \IfNoValueTF{#2}{%
        #1
    }{%
        \IfNoValueTF{#3}{%
            #1_{#2}
        }{%
            #1_{#2}^{(#3)}
        }%
    }%
}
\newcommand{\too}{\allowbreak{\shortrightarrow}\allowbreak}
\DeclareDocumentCommand \protecc { o } {%
    \IfNoValueTF{#1}{%
        \texttt{protect}\xspace%
    }{%
        \ifthenelse{\equal{#1}{}}{%
            \texttt{protect($\cdot$)}\xspace%
        }{%
            \texttt{protect(#1)}\xspace%
        }%
    }%
}
\newcommand{\allblockknowns}{{{\widehat\knownssymbol}_\allblocks}}
\DeclareDocumentCommand \blockknowns { o o } {%
    \IfNoValueTF {#1} {%
        \IfNoValueTF {#2} {%
            \allblockknowns\qty(\block)
        }{}
    }{%
        \IfNoValueTF {#2} {%
            \allblockknowns\qty(\block[#1])
        }{%
            \allblockknowns\qty({#1}_{#2})
        }%
    }%
}
\newcommand{\allfrontiers}{\mathcal{F}}
\DeclareDocumentCommand \frontierraw { o } {%
    \IfNoValueTF {#1} {%
        \allfrontiers
    }{%
        \allfrontiers_{#1}
    }%
}
\DeclareDocumentCommand \frontier { m } {%
    \ifthenelse{\equal{#1}{}}{%
        \allfrontiers($\cdot$)
    }{%
        \allfrontiers(#1)
    }%
}
\DeclareDocumentCommand \frontierinv { m } {%
    \ifthenelse{\equal{#1}{}}{%
        \allfrontiers\inv($\cdot$)
    }{%
        \allfrontiers\inv(#1)
    }%
}
\newcommand{\klee}{\textsc{Klee}\xspace}
\DeclareDocumentCommand \assert { o } {%
    \IfNoValueTF {#1} {%
        \texttt{assert($\dots$)}\xspace
    }{%
        \texttt{assert(#1)}\xspace
    }%
}
\DeclareDocumentCommand \transmit { o } {%
    \IfNoValueTF {#1} {%
        \texttt{Tr($\dots$)}\xspace
    }{%
        \texttt{Tr(#1)}\xspace
    }%
}
\newcommand{\specbarr}{\texttt{SPEC\_BARR}\xspace}
\DeclareDocumentCommand \func { m o o } {%
    \IfNoValueTF {#2} {%
        \IfNoValueTF {#3} {%
            \texttt{#1($\dots$)}\xspace
        }{}%
    }{%
        \IfNoValueTF {#3} {%
            \texttt{$\text{#1}_\text{#2}$($\dots$)}\xspace
        }{%
            \texttt{$\text{#1}_\text{#2}$(#3)}\xspace
        }%
    }%
}
\DeclareDocumentCommand \funcprime { m o o } {%
    \IfNoValueTF {#2} {%
        \IfNoValueTF {#3} {%
            \texttt{#1$'$($\dots$)}\xspace
        }{}%
    }{%
        \IfNoValueTF {#3} {%
            \texttt{$\text{#1}_\text{#2}'$($\dots$)}\xspace
        }{%
            \texttt{$\text{#1}_\text{#2}'$(#3)}\xspace
        }%
    }%
}
\DeclareDocumentCommand \set { m } {%
    \{{#1}\}
}
\begin{abstract}

Speculative execution attacks undermine the security of constant-time programming, the standard technique used
to prevent microarchitectural side channels in security-sensitive software such as cryptographic code.
Constant-time code must therefore also deploy a defense against speculative execution attacks to prevent
leakage of secret data stored in memory or the processor registers. Unfortunately, contemporary defenses, such
as speculative load hardening (SLH), can only satisfy this strong security guarantee at a very high performance
cost.

This paper proposes \toolname, a static program analysis and protection framework to \emph{efficiently} protect
constant-time code from speculative leakage. \toolname models ``attacker knowledge''---data which is inherently
transmitted (or, \emph{implicitly} declassified) by the code's \emph{non-speculative} execution---and
statically removes protection on such data from points in the program where it is already guaranteed to leak
non-speculatively. Overall, \toolname ensures that data which never leaks during the non-speculative execution
does not leak during speculative execution, but with lower overhead than conservative protections like SLH.

\end{abstract}
\begin{document}

\title[\toolname: A Static Analysis for Modeling Non-Spec. Knowledge to Relax Spec. Execution
Security Measures (Full Version)]{\toolname: A Static Analysis for Modeling Non-Speculative Knowledge to Relax Speculative Execution
Security Measures (Full Version)}

\author{Rutvik Choudhary}
\email{rutvikc2@illinois.edu}
\affiliation{%
  \institution{University of Illinois Urbana-Champaign}
  \city{Urbana}
  \state{Illinois}
  \country{USA}
}
\author{Alan Wang}
\email{alanlw2@illinois.edu}
\affiliation{%
  \institution{University of Illinois Urbana-Champaign}
  \city{Urbana}
  \state{Illinois}
  \country{USA}
}
\author{Zirui Neil Zhao}
\email{ziruiz6@illinois.edu}
\affiliation{%
  \institution{University of Illinois Urbana-Champaign}
  \city{Urbana}
  \state{Illinois}
  \country{USA}
}
\author{Adam Morrison}
\email{mad@cs.tau.ac.il}
\affiliation{%
  \institution{Tel Aviv University}
  \city{Tel Aviv}
  \country{Israel}
}
\author{Christopher W. Fletcher}
\email{cwfletch@illinois.edu}
\affiliation{%
  \institution{University of Illinois Urbana-Champaign}
  \city{Urbana}
  \state{Illinois}
  \country{USA}
}
\maketitle

\sloppy

\thispagestyle{empty}

\section{Introduction}
\label{sec:intro}

Security-sensitive programs, such as cryptographic software, perform computations over secret data (e.g., cipher
keys and plaintext or personal information). Secure software must prevent its secrets from being ``leaked'' over
microarchitectural side channels, which occur when secret data is passed as the operand %either to a control-flow
to a \emph{transmitter} instruction. A transmitter is an instruction whose execution creates operand-dependent
hardware resource patterns that can potentially be observed (``received'') by the attacker, allowing the
attacker to learn information about the transmitter's operand. Classic examples of transmitters are load and
branch instructions, whose execution makes operand-dependent changes to the cache
state~\cite{flush+,last_level_cache_practical} and instruction sequence. However, numerous other ``variable
time'' instructions are also considered transmitters~\cite{FPU_leaky,Mult_leaky,practical_doprogramming}.

Traditionally, the standard technique for preventing secret leakage over microarchitectural side channels is to
use \emph{constant-time programming} (also called \emph{data-oblivious programming}). Constant-time code
performs its computation without passing secret-dependent data as arguments to transmitter
instructions~\cite{constanttimersa,curve25519,poly1305,pc_model,Raccoon,FPU_leaky}.

Unfortunately, the discovery of speculative execution (or Spectre)
attacks~\cite{spectre,ret2spec,smother,spectre-BHI,retbleed} undermines the constant-time
approach~\cite{smother,oisa,deianpldi,SpectreDeclassified}. The problem is that constant-time guarantees are
based on correct execution semantics and may not hold in an illegal mis-speculated execution created by a
speculative execution attack. For example, misprediction of a loop branch~\cite{oisa}, function
return~\cite{deianpldi}, indirect call~\cite{smother}, and so on can cause the processor to jump to a
transmitter with the transmitter's operand holding secret data, even though the transmitter's operand would
never contain secret data in a correct execution (see Figure~\ref{fig:intro}).

\begin{figure}[t]
\begin{lstlisting}[]
  for (int i = 0; @i < NUM_ROUNDS;@ i++) {
    !S! = AES_Round(!S!, round_key[i]);
  }
  Tr(!S!);
\end{lstlisting}
\caption{
Example of constant-time code breaking due to speculative execution~\cite{oisa}. Misprediction of the loop branch (\bluett{i$\,$<$\,$NUM\_ROUNDS}) can cause
an intermediate value of the AES state \pinktt{S} to be passed to a transmitter, \texttt{Tr($\cdot$)}.\vspace{-1pt}}
\label{fig:intro}
\end{figure}

Consequently, constant-time code must additionally deploy a defense against speculative execution attacks.
Importantly, this defense must prevent speculative leakage not only of speculatively-accessed data (read from
memory under mis-speculation)~\cite{stt} but also of \emph{non-speculatively-accessed data} that already exists
in processor registers when mis-speculation begins (as in Figure~\ref{fig:intro}). Contemporary defenses can
only satisfy this strong security guarantee by blocking speculation of all transmitters, which incurs a high
performance cost~\cite{SSLH,USLH}. We therefore ask: \emph{how can constant-time code be efficiently protected
from speculative execution attacks?}

We answer this question with \toolname, a static program analysis and protection framework that can relax
speculative execution defenses. Our approach enforces the security property proposed by Speculative Privacy
Tracking (SPT)~\cite{spt}: data that never leaks during non-speculative execution does not leak during
speculative execution. This property implies that data which gets \emph{implicitly declassified}, due to being
passed as the operand of a transmitter in the program's non-speculative execution, does not need to be protected
during the program's speculative execution. This enables the safe removal of protection mechanisms and
commensurately lower performance overhead.

Leveraging the SPT security property to reap performance benefits is non-trivial, however. SPT is only able to
achieve gains by introducing hardware mechanisms for dynamically tracking non-speculative leakage and disabling
protection at run-time. But SPT hardware is not available in current processors and its future adoption status
is not clear. In this paper, we leverage the SPT security property \emph{purely in software}. The resulting
approach can be deployed to improve the performance of constant-time code today. \toolname can also identify
protection relaxations that SPT hardware misses, because \toolname is a static program analysis that reasons
about all possible program executions, whereas SPT hardware operates only based on the program's current
execution.

In a nutshell, \toolname performs program analysis to determine the ``attacker's knowledge'' at every edge in
the program's non-speculative control-flow graph, where ``attacker knowledge'' refers to the data guaranteed to
be implicitly declassified (declared non-secret) by the non-speculative execution if said control transfer
occurs at run-time. \toolname can thus identify data that is \emph{guaranteed} to leak if execution reaches a
program point (although it may not leak at that point, but only later in the execution).

We use \toolname's analysis to relax speculative execution protections, such as SLH, for several constant-time
programs. By reasoning about attacker knowledge, \toolname is able to deduce that many transmitters (e.g.,
loads) leak information about the ``same thing'' (e.g., the base address of an array) and that this information
is guaranteed to be known in the program's non-speculative execution. This enables \toolname to reduce overhead
significantly; in some cases, replacing all protection instrumentation with a single mechanism (e.g., a barrier)
that guarantees the program is entered non-speculatively.

To summarize, we make the following contributions.
\begin{enumerate}
\item We propose an abstraction, \emph{non-speculative knowledge}, for deducing a program region where a
variable will ``inevitably'' be leaked in the program's non-speculative execution.
\item We propose a novel program analysis that can calculate non-speculative attacker knowledge at each program
edge, and strategies for placing protection primitives based on said non-speculative attacker knowledge.
\item We evaluate the impact of our analysis on three constant-time benchmarks and demonstrate that our analysis
leads to more efficiently protected programs.
\end{enumerate}

Our analysis is open source, and can be found at \url{https://github.com/FPSG-UIUC/declassiflow}.
\section{Background}

\subsection{Programs, Executions, and Traces}
\label{sec:building-blocks}

\subsubsection{The Building Blocks of a Program}

We consider programs written in the LLVM assembly language~\cite{llvm-lang}, which uses static single assignment
(SSA) (Section~\ref{sec:SSA}).

A \define{program} is a list of instructions that perform computations on variables. $\allvars$ denotes the set
of all variables in the program, and $\varlatt$ denotes the powerset of $\allvars$.

Instructions in a program are partitioned into smaller lists, \define{basic blocks} (or \define{blocks} for
short), defined in the usual way. They are typically denoted as $\block$, often with a subscript. We use
$\allblocks$ to denote the set of all blocks in the program.

A \define{control-flow edge} (or \define{edge} for short) is an ordered pair $(\block[i], \block[j])$ for some
$\block[i], \block[j] \in \allblocks$. A control-flow edge exists between $\block[i]$ and $\block[j]$ if it is
theoretically possible to execute $\block[j]$ immediately after $\block[i]$ has been executed. Edges are denoted
as $\edge$, often with a subscript. We use $\alledges$ to denote the set of all edges in the program. For any
block $\block$, we denote the set of its input edges and output edges as $\inedgeset$ and $\outedgeset$
respectively.

A \define{control-flow graph} is a directed graph where the nodes are given by $\allblocks$ and the edges are
given by $\alledges$. We define a node $\entrynode\in\allblocks$ which is the singular point of entry for the
program as well as a node $\exitnode\in\allblocks$ which is the singular exit point of the program.%
\footnote{Not all paths through the control-flow graph will reach $\exitnode$, e.g. in the case of
non-terminating loops.} By definition, $\inedgeset[\entrynode][]=\emptyset$ and
$\outedgeset[\exitnode][]=\emptyset$.

For any two blocks $\block[i], \block[j] \in \allblocks$, $\block[i]$ \define{dominates} $\block[j]$, denoted
$\block[i] \dom \block[j]$, if every path from $\entrynode$ to $\block[j]$ goes through $\block[i]$. $\block[j]$
\define{post-dominates} $\block[i]$, denoted $\block[j] \pdom \block[i]$, if every path from $\block[i]$ to
$\exitnode$ goes through $\block[j]$. $\block[i]$ is a \emph{predecessor} of $\block[j]$ if ($\block[i]$,
$\block[j]$) $\in \alledges$.

We say that an edge $\edge = (\block[i], \block[j])$ dominates a block $\block'$ if $\block[j] \dom \block'$.
Similarly, $\block'$ dominates edge $\edge$ if $\block' \dom \block[i]$. A variable $x$ dominates edge $\edge$
if the block in which $x$ is defined, denoted as $\block[x]$, dominates $\block[i]$. Similarly, $\edge$
dominates $x$ if $\block[j] \dom \block[x]$.

A \define{region}, typically denoted $\region$, is a set of blocks such that: \circled{1} there is one block in
$\region$, known as the \define{header}, that dominates all others; \circled{2} for any two blocks $\block[i]
\in \allblocks$ and $\block[j] \in \region$, if there is a path from $\block[i]$ to $\block[j]$ that doesn't
contain the header, then $\block[i]$ is in $\region$~\cite{dragon-book}.

Edge $(\block[i],\block[j])$ is a \define{back edge} if $\block[j]\dom \block[i]$. By convention, every block
dominates itself, and so self edges (edges where $\block[i]=\block[j]$) are considered back edges. If there is a
back edge in the control-flow graph, then there is a cycle. Cycles in the control-flow graph are typically
created by using loop constructs (e.g. \texttt{for} and \texttt{while}).

\subsubsection{Executions and Traces}
\label{sec:exec-traces}

A \define{non-speculative execution} of a program $P$ on some input is the sequence of instructions $P$ executes
according to the semantics of the LLVM language~\cite{llvm-lang}. We say that the $k$-th instruction in an
execution occurs at \define{time} $k$. An execution \define{traverses} edge $\edge = (\block[i], \block[j])$ at
time $k$ if the $k$-th and $(k+1)$-th instructions are the last instruction in $\block[i]$ and the first
instruction in $\block[j]$, respectively.

A \define{trace}, typically denoted as $\trace$, is a sequence of edges. A trace $\trace$ is \define{realizable}
if, for some execution of the program, $\trace$ is the sequence of edges traversed by the execution. Realizable
traces thus model the control flow of executions. We refer to them interchangeably for brevity, understanding
that every realizable trace is associated with an execution. The set of all realizable traces of the program is
denoted $\alltraces$. An edge $\edge$ is \define{realizable} if $e \in \trace$ for some $\trace \in \alltraces$.

We use $\allpaths$ to denote all possible paths through the control-flow graph, including those
that do not correspond to a realizable trace. By definition, $\alltraces \subseteq \allpaths$.

We now extend our execution semantics to capture \emph{speculative executions}. We consider control-flow
speculation of branches whose speculative target is consistent with the control-flow graph. That is, for
$\block[i] \neq \block[j]$, a speculative execution executes instruction $\inst \in \block[i]$ followed by
instruction $\inst' \in \block[j]$ only if $\inst$ is the last instruction in $\block[i]$, $\inst'$ is the first
instruction in $\block[j]$, and $\edge = (\block[i], \block[j]) \in \alledges$. The semantics of the LLVM
assembly language~\cite{llvm-lang} can be extended to model this form of speculation by adding
microarchitectural events\footnote{A ``microarchitectural event'' can be thought of as an instruction that
produces (possibly operand-dependent) microarchitectural changes but no architectural changes.} for mispredicted
control-flow instructions and eventual rollback of a mis-speculated sequence of instructions; this would be
similar to various semantics developed in prior work~\cite{kopfcontracts, Spectector, InSpectre, blade}.

The implications of the above speculative semantics on our analysis are discussed in
Section~\ref{sec:security_goal}.

\subsection{Single Static-Assignment Form}
\label{sec:SSA}

Our analysis works with code written in single static-assignment (SSA) form. This is a popular and useful
abstraction that makes program data dependencies clear and variable identities unambiguous. Code is in SSA form
when any usage of a variable is reached by exactly one definition of that variable~\cite{ssa}.

In code with control flow, a variable's definition may depend on earlier control-flow decisions. In order to
make such code SSA-compliant, control flow-dependent definitions at join points are assigned to by
\phifunctions~\cite{ssa} which return an input based on the control-flow decision. We define the semantics of
the \phifunction as follows: suppose in some block $\block$ we have an instruction $\inst[\phi]$ that is a
\phifunction, $y = \phi(x_1,\dots,x_N)$. By definition, there are $N$ edges into $\block$, denoted $\edge[1],
\dots, \edge[N]$. The \phifunction is defined such that at any point in any execution, if $\edge[i]$ is the last
edge to reach $\inst[\phi]$, then $y = x_i$. An important consequence of the semantics of a \phifunction is that
every $x_i$ must be defined \emph{prior} to $\block$. This is simply because it must be a usable name by the
time the \phifunction is encountered.

As an example, the following non-SSA code on the left is transformed to produce SSA code on the right.

\begin{minipage}[H]{.4\linewidth}
    \begin{lstlisting}
    x = 0;
    if (...) {
      x = x + 1
    }
    print(x)
    \end{lstlisting}
\end{minipage}%
\hfill
\begin{minipage}[H]{.5\linewidth}
    \begin{lstlisting}
    $\codevar{x}[1]$ = 0;
    if (...) {
      $\codevar{x}[2]$ = $\codevar{x}[1]$ + 1
    }
    $\codevar{x}[3]$ = $\text{\straightphi}$($\codevar{x}[1]$,$\codevar{x}[2]$)
    print($\codevar{x}[3]$)
    \end{lstlisting}
\end{minipage}
\vspace{5pt}

\section{Setting and Security Goal}
\label{sec:security_goal}

The goal of \toolname is to efficiently prevent speculative leakage of sensitive data while maintaining a useful
security guarantee.

First and foremost, we define the points of potential information ``leakage.'' A \define{transmitter} is any
instruction whose execution exhibits operand-dependent hardware resource usage. Classic examples of transmitters
are loads and branches~\cite{kocher1996timing}. Depending on the microarchitecture, there may be
others~\cite{FPU_leaky,pandora,practical_doprogramming,oisa,USLH}. We say that the operands (data) passed to a
transmitter are \define{leaked}. Note that a transmitter may leak its operands fully or partially.

A \define{non-speculative transmitter} is one that appears in the program's non-speculative execution. A
\define{speculative transmitter} appears in the program's speculative execution. That is, it appears as an
operand-dependent microarchitectural event in the program's speculative semantics
(Section~\ref{sec:exec-traces}), which may or may not correspond to an instruction that architecturally retires.

We assume the standard attacker used in the constant-time programming
setting~\cite{stt,sdo,spt,blade,kopfcontracts}. Here, the attacker knows the victim program. The attacker
further sees a projection, or view, of the victim's non-speculative and speculative executions: namely,
\circled{1} the sequence of values taken by the program counter (PC), and \circled{2} the sequence of values
passed to transmitters.

With the above in mind, there are two main protection guarantees a speculative execution defense can
have~\cite{stt}. The first prevents speculatively-accessed data from being passed to speculative transmitters.
Enforcing this policy satisfies ``weak speculative non-interference''~\cite{kopfcontracts}, and is sufficient to
eliminate universal read gadgets and defend programs in sandbox settings~\cite{spectre_google}. As a result,
there has been significant interest in both hardware~\cite{stt,sdo,spectre_guard} and software~\cite{slh,blade}
defenses that provide said guarantee.

Unfortunately, such defenses are not comprehensive as there are still important applications---namely
constant-time cryptography---that \emph{non-speculatively} read and compute on sensitive data but can still leak
said data \emph{speculatively}~\cite{spt,kopfcontracts,stt,NDA,dolma}. To protect these programs, one requires a
defense with a broader protection guarantee: i.e., one that prevents both speculatively \emph{and
non-speculatively} accessed data from being passed to speculative transmitters. Defense mechanisms that meet
this guarantee provide \emph{complete} protection from speculative execution attacks but typically come at high
performance overhead, i.e., they are tantamount to delaying every transmitter's execution until they become
non-speculative or are squashed~\cite{NDA,SSLH,USLH}.

The aforementioned protections are (almost always) overly conservative because they implicitly treat \emph{all}
data as ``secret'' and deserving of protection. Yet, not all data is semantically secret. Software annotations
could directly convey what is and is not secret, but there are major disadvantages to programmer intervention.
For example, programmer intervention/expert labeling cannot be applied to legacy code already deployed. So, to
determine what data is secret without requiring expert labeling/intervention, we adopt a definition of
``secret'' proposed by SPT~\cite{spt},

\begin{definition}\label{defn:secret}
Data $x$ is \underline{secret} if there is no data flow from it to an operand of a non-speculative transmitter,
where data flow refers to flow through LLVM SSA variables and LLVM data memory.
\end{definition}

This definition is motivated by the constant-time programming model in which sensitive data is never passed to
non-speculative transmitters. The contrapositive of this is that if any data is passed to a non-speculative
transmitter, it is not sensitive, i.e. not ``secret''.

Definition~\ref{defn:secret} can be interpreted as enabling efficient implementations that satisfy
\define{generalized constant-time} (GCT)~\cite{kopfcontracts}. Denote the non-speculative and speculative
program semantics as $S^{\mathsf{nspec}}$ and $S^{\mathsf{spec}}$, respectively. For brevity, we also assume
these semantics encode the attacker's view, e.g., the set of transmitters. Given, a program $P$ and a policy
$\policy$ which defines what program variables are ``high'' (secret), $P$ satisfies GCT w.r.t.
$S^{\mathsf{nspec}}$, $S^{\mathsf{spec}}$ and $\policy$ if the following requirements hold.
\begin{enumerate}
\item Executions of $P$ on $S^{\mathsf{nspec}}$ satisfy non-interference w.r.t. $\policy$.  That is, attacker
observations of $P$'s execution, given $S^{\mathsf{nspec}}$, are independent of the values in
$\policy$.\label{vni}
\item Executions of $P$ on $S^{\mathsf{nspec}}$ that satisfy non-interference w.r.t. $\policy$ must also satisfy
non-interference on $S^{\mathsf{spec}}$ w.r.t. $\policy$.\label{sni}
\end{enumerate}
Requirement~\ref{sni} is referred to as \emph{speculative non-interference} or SNI for short~\cite{kopfcontracts,Spectector}.

Given this context, we can view \toolname as a function $P' = D(P; S^{\mathsf{nspec}}, S^{\mathsf{spec}})$ that
takes a program $P$ as input, produces a program $P'$ as output, and is parameterized by $S^{\mathsf{nspec}}$
and $S^{\mathsf{spec}}$. Suppose $P$ satisfies Requirement~\ref{vni}; it need not satisfy Requirement~\ref{sni}.
For a specified $S^\mathsf{nspec}$ and $S^{\mathsf{spec}}$, $D$ outputs a $P'$ that is functionally equivalent
to $P$ and now (additionally) satisfies Requirement~\ref{sni}, i.e., now satisfies SNI and therefore GCT.

Importantly, $D$ did not require $\policy$ as an input, but rather infers a policy $\policy[\decl]$ which is
sound w.r.t. $\policy$. That is, $\policy\subseteq\policy[\decl]$. This is possible because $D$ has access to
$P$, which already enforces $\policy$. At the same time, $\policy[\decl]$ will provide a basis for implementing
efficient protection. That is, if $\policy[\text{\scriptsize{All}}]$ denotes the policy (described above) that
treats all data as secret, we have that $\policy[\decl]\subseteq \policy[\text{\scriptsize{All}}]$ in theory and
$\policy[\decl]\subset \policy[\text{\scriptsize{All}}]$ in practice. This will allow us to more efficiently
protect programs without additional programmer intervention or labeling, beyond the program being written to
enforce non-speculative/vanilla constant-time execution.

\subsection{Semantics and Transmitters}
\label{sec:assumptions:goals}

$D$ is parameterized by $S^\mathsf{nspec}$ and $S^{\mathsf{spec}}$, which encode the execution semantics and
transmitters.

\paragraph{Semantics}
For security, our analysis assumes the semantics set forth in Section~\ref{sec:exec-traces}, in particular that
the speculative semantics is restricted to control-flow speculation that remains on the control-flow graph. This
is sufficient to protect non-speculatively accessed data in the presence of direct branches (similar to those
found in Spectre Variant 1). To block leakage due to other forms of speculation (e.g., indirect branches whose
\emph{targets} are predicted, as in Spectre variant 2), our analysis can adopt complementary defenses such as
``retpoline''.\footnote{See \url{https://support.google.com/faqs/answer/7625886}}

\paragraph{Transmitters}
Our analysis is flexible with respect to which instructions are considered transmitters. For the rest of the
paper, we assume loads are transmitters that can execute speculatively. We assume that branches and stores are
also transmitters, but only if they appear in the non-speculative execution. That is, we assume that branches
and stores do not change microarchitectural state in an operand-dependent way until they become non-speculative.
We note, this still allows for branch prediction; it just stipulates that said predicted branches only resolve
(and redirect execution) when they become non-speculative. To reiterate: these choices were not fundamental, and
the analysis can be modified to account for other transmitters and their speculative vs. non-speculative
behavior.

\section{Achieving Efficient Protection}
\label{sec:opportunities}

We now describe an analysis, dubbed \toolname, that enables low-overhead protection for ``secrets'' as given by
Definition~\ref{defn:secret}.

To understand our scheme's security and performance, we start by considering a secure but high-overhead
software-based protection. We will use the abstraction proposed by Blade~\cite{blade}, which introduces a
primitive called \protecc[\codevar{v}]. \protecc wraps a variable $\codevar{v}$ and delays its usage until it
becomes non-speculative (or \emph{stable}~\cite{blade}). Blade points out that, while current hardware does not
support \protecc, \protecc can be emulated today by introducing control-flow-dependent data
dependencies~\cite{slh}, speculation barriers, or a combination of the two. Regardless of how it is implemented,
executing \protecc incurs overhead by delaying an instruction's (and its dependents') execution. This is
especially pronounced when said instruction is on the critical path for instruction retirement (as is typical
with loads).

As discussed in Section~\ref{sec:security_goal}, we wish to protect non-speculatively accessed data. Thus, a
secure baseline defense must protect the operands of all transmitters that can execute speculatively. We express
this by wrapping said transmitters' operands with \protecc statements, placed immediately before each
transmitter and in the same block.\footnote{We note that this protection scope is broader than Blade's (which
only protects speculatively-accessed data), hence we don't compare to Blade further.} This approach is
tantamount to that of several recent defense proposals~\cite{SSLH,USLH,NDA}.

An example of our baseline is shown in Figure~\ref{fig:motiv-example-orig}, which depicts a program that
contains a transmitter inside a loop as well as at the exit point. The transmitters are denoted
\transmit[$\cdot$]. While this approach is secure, it is also expensive; the \pinktt{\protecc[x]} statements can
be encountered an unbounded number of times (depending on the semantics of the loop). Given this strict policy,
which effectively prevents transmitters from executing speculatively, nothing more can be done to improve
performance in this example.

However, if we instead consider Definition~\ref{defn:secret} and its implications, we can see that this program
contains unnecessary protection. From the discussion surrounding that definition, we saw that data which does
not meet the definition for ``secret'' does \emph{not} need to be protected from speculative leakage. This is
the manner in which we can reduce protection overhead. To aid in this process, we define a more useful concept
that is core to our work,

\begin{definition}\label{defn:known}A variable $x$ is considered \underline{known} when it is guaranteed to be
passed to a non-speculative transmitter (i.e., its value will inherently be revealed) or when its value can be
inferred from other known variables.
\end{definition}

We say that a variable can be ``inferred'' from other known variables if its value can be computed via a
polynomial-time algorithm\footnote{This is to admit computational assumptions. For example, knowing a plaintext
and its corresponding AES ciphertext should not give one the ability to ``infer'' the key!} from the values of
said other variables. We consider the set of known variables over time to be the attacker's
\define{non-speculative knowledge} (or just ``knowledge'' for short). One key addition made by
Definition~\ref{defn:known} is that a variable can be considered known not only when it is observed to be
non-secret, but even when it is guaranteed to \emph{eventually} become non-secret. That is, for the purposes of
our analysis, inevitable non-secrecy is as good as knowledge.

Our goal is to use Definition~\ref{defn:known} to derive a minimal set of locations at which to place \protecc
statements. For this, we need to define one more concept: the \emph{non-speculative knowledge frontier}
(\emph{knowledge frontier} for short) for each program variable. Intuitively, the knowledge frontier for a
variable $x$ represents the earliest points in the program such that, if the program's non-speculative execution
``crosses'' the knowledge frontier, $x$ will be known. We define it more precisely as,

\begin{definition}\label{defn:frontier} For any variable $x$, let $K_\allblocks(x)$ denote the set of blocks in
which $x$ is known. The \underline{knowledge frontier} of $x$, denoted $\frontier{x}$, is the smallest subset of
$K_\allblocks(x)$ such that for any $\block$ in $K_\allblocks(x)$, there is no path from $\entrynode$ to
$\block$ that does not contain some $\block'$ from $\frontier{x}$.
\end{definition}

Note that by this definition, the knowledge frontier for a variable in a given program is \emph{unique}. We can
reframe what is required of a protection mechanism to enforce SNI (with respect to the policy implied by
Definition~\ref{defn:secret}) in terms of the knowledge frontier,

\begin{property}(Frontier Protection Property for $x$) A placement of \protecc[x] statements enforces SNI with
respect to $x$ if and only if no speculative execution can transmit a function of $x$ before the non-speculative
execution crosses the knowledge frontier for $x$.
\label{prop:frontier}
\end{property}

One straightforward strategy to satisfy Property~\ref{prop:frontier} is to add \protecc statements only ``along
the knowledge frontier'' for each given variable, as opposed to at the site of each transmitter.

\paragraph{Example} Look again at the program in Figure~\ref{fig:motiv-example}. As mentioned, a naive
protection scheme places \pinktt{\protecc[x]} statements in the same blocks as all transmitters. However, if the
execution enters $\block[1]$ non-speculatively, it will necessarily (non-speculatively) enter either $\block[2]$
or $\block[3]$ next. Thus, the \blue{knowledge frontier} for \codevar{x} is \blue{$\block[1]$}. With this, we
can re-instrument the code with a \pinktt{\protecc[x]} inserted in $\block[1]$ only. Crucially, we have removed
the \pinktt{\protecc[x]} from the loop, which means that protection overhead will likely be amortized.

This is more aggressive than what is possible with the hardware-based defense SPT, on which
Definition~\ref{defn:secret} is based. Since SPT doesn't know the program's structure, it doesn't know if there
is a path from $\block[1]$ where \codevar{x} is not leaked non-speculatively, and hence it falls back to a
baseline protection that is tantamount to Figure~\ref{fig:motiv-example-orig}. As mentioned, an interesting
aspect of Definition~\ref{defn:known} is that it allows for variables to be known ``ahead of time'' (i.e. before
they are actually passed to a transmitter). This is a key difference between our approach and SPT's; knowledge
that is inevitable in the future is knowledge that can be exploited ``now.'' SPT on the other hand must wait to
observe transmitters retire before it treats its operands as known.

\begin{figure}[t]
    \centering
    \subfloat[][Code protected via prior work\label{fig:motiv-example-orig}]{%
    \includegraphics[height=4.2cm]{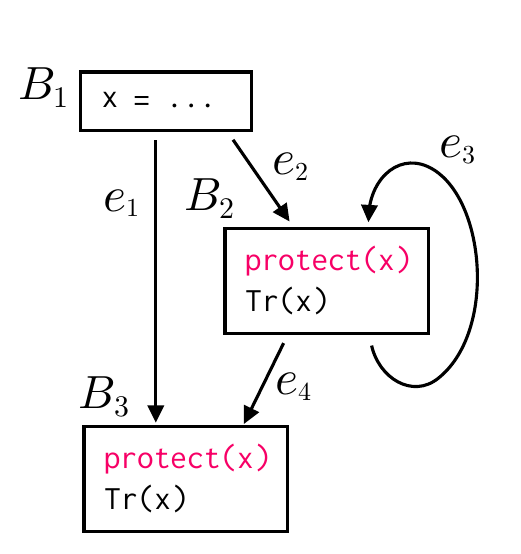}}
    \subfloat[][Code protected via our work\label{fig:motiv-example-decl}]{%
    \includegraphics[height=4.2cm]{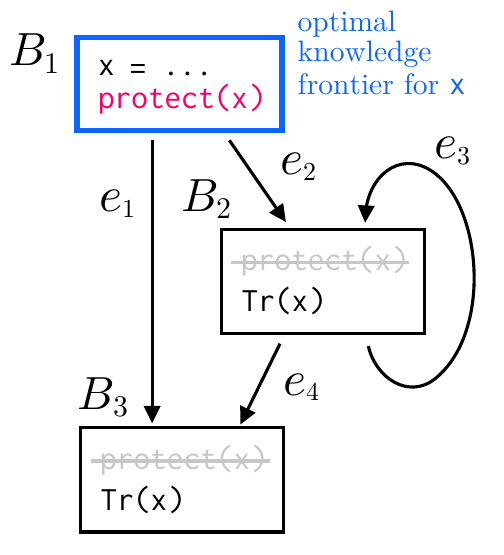}}
    \caption{Making the protected code (left) more efficient (right).}
    \label{fig:motiv-example}
\end{figure}

\section{Modeling Attacker Knowledge}
\label{sec:modeling-knowledge}

We will now define non-speculative attacker knowledge, which will be used to compute the non-speculative
knowledge frontier.

\subsection{Non-Speculative Knowledge}
\label{sec:non-spec-knowledge}

We model the attacker's knowledge at the granularity of variables, and we treat knowledge in a binary fashion;
the value of a variable is either ``fully'' known to an attacker or no function of the variable is known. Thus,
an attacker's knowledge is a subset of $\allvars$, and the full space of the knowledge of an attacker is
$\varlatt$.

Per Definition~\ref{defn:known}, in any trace, a variable is considered known at the point when (and any time
after) it is passed to a non-speculative transmitter (i.e., when its value is revealed), or at the point (and
any time after) its value can be inferred from other known variables. Rather than keeping track of knowledge
``temporally'' by tracking it over time (per trace), we can instead capture knowledge ``spatially'' by mapping
it onto the control-flow graph. We introduce the map $\alledgeknowns: \alledges \to \varlatt$ which represents
the distribution of knowledge over edges. We precisely define $\alledgeknowns$ as follows,

\begin{definition}\label{defn:KE} Take any edge $\edge \in \alledges$. We have $x \in \edgeknowns$ if and only
if for all traces $\trace \in \alltraces$ such that $\edge \in \trace$, $x$ is already known or is guaranteed to
become known every time the execution corresponding to $\trace$ traverses $\edge$. If $x \in \edgeknowns$, we
say ``$x$ is known on edge $\edge$''.
\end{definition}

There are two important clarifications we wish to make with respect to the above definition. First and foremost,
$x \in \edgeknowns$ does \emph{not} necessitate that $e$ is dominated by the definition of $x$ since the value
of $x$ may be inferable from other known variables, as discussed above. Second, the manner in which we define
knowledge and the way we intend to use it leave open the possibility for \define{vacuous knowledge}. A variable
is \define{vacuously known} if it is deduced to be known on a path on which it will never be defined. Crucially,
since such a variable is not defined on this path, it cannot be used. Thus, for the purposes of optimizing a
defense mechanism, this knowledge is inactionable. We will see in the next section that vacuous knowledge is an
important concept, particularly when analyzing \phifunctions.

\subsection{Instructions as Equations}
\label{sec:instructions}

We now define a set of relations that describe knowledge $\alledgeknowns$ with respect to a concrete program.

\subsubsection{\Nonphi instructions}

We first discuss how knowledge is propagated through \nonphi instructions. An instruction $\inst$ is said to be
\define{deterministic} if it can be represented as an equation of the form $y=f(x_1, \dots, x_N)$.\footnote{This
can be generalized to instructions with multiple outputs by writing down a separate equation for each output.}
Control-flow instructions, by convention, don't have an output. Loads and stores are not considered
deterministic since our current analysis does not model the contents of memory. We define $\instoutput = y$ and
$\instinputs = \{x_1, \dots, x_N\}$.

Deterministic instructions are of interest since knowing all but one of their operands/results enables deduction
of the remaining one. We say an instruction $I$ is \emph{forward solvable} if, for all concrete assignments to
$x_i \in \instinputs$, we have a unique solution for $y$. We say $I$ is \emph{backward solvable} if for any $x_i
\in \instinputs$, for all concrete assignments to $\instoutput$ and all $x_{i \neq j} \in \instinputs$, we have
a unique solution for $x_i$. All deterministic instructions (e.g., \texttt{add}, \texttt{sub}, \texttt{mul}) are
forward solvable; not all are backwards solvable.\footnote{For example, $y = x_1 \times x_2$ is deterministic
and forward solvable, but not backwards solvable; if one operand is 0, the output is 0 regardless of the other
operand's value.} Exploiting the solvability of instructions is how we achieve propagation of knowledge through
computations as motivated in Section~\ref{sec:opportunities}.

An important consequence of working with programs expressed in SSA form is that the equations that define
variables are \emph{unique}. Crucially (and perhaps counter-intuitively) this implies the equations associated
with instructions are exploitable \emph{anywhere} in the control-flow graph, even if the associated instruction
is not encountered in a given trace or is unreachable in general. When analyzing a non SSA-form program, there
may be multiple definitions for any given variable, and thus you need to consider only the definitions that
apply to the locale you are in. Attempting to analyze a non-SSA program while keeping track of which definitions
apply where implicitly converts the program to SSA form. We will see the benefits of the global view of
equations in the examples from Section~\ref{sec:two-examples}.

\subsubsection{\phifunctions}

We now discuss \phifunctions. Before we begin, recall that the PC is public to the attacker due to assumptions
made by the constant-time programming model (Section~\ref{sec:security_goal}). \emph{Since the PC is public, it
can be considered known at all points in time in the non-speculative execution}.

Because the PC is known, we can treat \phifunctions as being forward solvable. If at any point all of a
\phifunction's inputs are known, then regardless of which one gets assigned to the output, the output must be
known as well. Consider a \phifunction $\inst[\phi]$ of the form $y = \phi(x_1,\dots,x_N)$ in a block $\block$
with input edges $\edge[i]$ and output edges $\edge[o]$. The semantics of $\inst[\phi]$ are such that $y = x_i$
if $\edge[i]$ is taken to reach $\block$. Now, suppose that $x_1,\dots,x_N$, are known on some edge $\edge$. If
$\edge$ is in a trace containing $\inst[\phi]$, we know $y$ because we know which $x_i$ is assigned to $y$
(because the PC is known) and we know each $x_i$. Note that if $\inst[\phi]$ is \emph{not} in a trace containing
$\edge$, then knowledge of $y$ is vacuous; it cannot be used in a meaningful way.

There is an additional, more subtle version of forward solvability when considering \phifunctions. Consider
again $\inst[\phi]$ as defined before. If each input $x_i$ to the \phifunction is known on its respective edge
$\edge[i]$, $y$ is known on all $\edge[o]$. This is again due to the assumption that the PC is known; the input
edge used to arrive at $\block$ is known and thus we will know which $x_i$ is assigned to $y$.

Note that \phifunctions are \emph{not} backward solvable; knowing the output and all but one of the inputs
doesn't necessarily reveal the last input. That said, there \emph{is} a causal relationship we can exploit in
the backwards direction. Using the definition/semantics of $\inst[\phi]$ from before, suppose that $y$ is known
on all output edges $\edge[o]$. Then every $x_i$ is known on its respective edge $\edge[i]$. The justification
is as follows: suppose $\edge[i]$ is traversed. Then $y = x_i$ and we know for which $x_i$ this holds. Now, we
must leave $\block$ through some $\edge[o]$, and $y$ is known on every $\edge[o]$. Thus, in this scenario, $x_i$
is known.

\subsubsection{Knowledge propagation theorems}

We summarize the previous discussion with a series of theorems that describe relationships on knowledge. The
proofs of these theorems can be found in Appendix~\ref{app:proofs}. We start with theorems that describe the
knowledge available to every edge in isolation,

\begin{theorem}\label{thm:KE_transmitter}Consider an instruction $I$ of the form $\transmit[x]$ in some block
$B$ with output edges $\edge[o]$. For all $\edge[o]$, $x \in \edgeknowns[o]$. \end{theorem}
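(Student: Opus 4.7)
The plan is to unwind Definition~\ref{defn:KE} and show that the condition it imposes is trivially satisfied for any output edge of the block containing $\transmit[x]$. Concretely, I would fix an arbitrary output edge $\edge[o] \in \outedgeset[B]$ and an arbitrary realizable trace $\trace \in \alltraces$ with $\edge[o] \in \trace$, and then demonstrate that the execution corresponding to $\trace$ must execute $I$ (and hence reveal $x$) at or before the point where it traverses $\edge[o]$.

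First I would appeal to the definition of edge traversal from Section~\ref{sec:exec-traces}: an execution traverses $\edge[o] = (B, B')$ at time $k$ precisely when the $k$-th instruction is the last instruction of $B$ and the $(k{+}1)$-th instruction is the first instruction of $B'$. In particular, reaching the final instruction of $B$ requires (by the sequential semantics of a basic block) that every preceding instruction of $B$, including $I$, has been executed earlier in the trace. Thus at the moment the trace traverses $\edge[o]$, the instruction $\transmit[x]$ has already been executed non-speculatively in that trace.

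Next I would invoke the definition of a transmitter from Section~\ref{sec:security_goal}: executing $\transmit[x]$ non-speculatively passes $x$ as an operand to a non-speculative transmitter, which by Definition~\ref{defn:known} makes $x$ \emph{known} from that point onward in the trace. Combining this with the previous step, $x$ is guaranteed to become known every time the execution traverses $\edge[o]$, which is exactly the condition required by Definition~\ref{defn:KE} for $x \in \edgeknowns[o]$. Since $\edge[o]$ and $\trace$ were arbitrary, the claim follows for all output edges.

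I do not expect any serious obstacle here, since the statement is essentially a direct ``unwrapping'' of Definitions~\ref{defn:known} and~\ref{defn:KE} together with the intra-block sequential semantics. The one subtlety to be careful about is the quantifier order in Definition~\ref{defn:KE}: we must establish the property for \emph{every} realizable trace containing $\edge[o]$, not merely for some, but this is immediate because the argument above made no assumption on $\trace$ beyond $\edge[o] \in \trace$. A second minor subtlety is that we need not worry about vacuous knowledge, because $x$ must be a well-defined SSA variable live at $I$ in order to appear as the operand of $\transmit[x]$, and $I$ dominates $\edge[o]$ within its block.
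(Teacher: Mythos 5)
Your proposal is correct and follows essentially the same route as the paper's own proof: fix an arbitrary output edge and an arbitrary realizable trace containing it, observe that the transmitter must already have executed by the time the edge is traversed, and conclude via Definitions~\ref{defn:known} and~\ref{defn:KE}. The extra detail you give about intra-block sequential semantics and quantifier order is just a more explicit spelling-out of the same argument.
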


\begin{theorem}\label{thm:KE_fwd_intra}Take any edge $\edge \in \alledges$. Consider a forward solvable
instruction $\inst$ from anywhere in the control-flow graph, and suppose it is of the form $y=f(x_1, \dots,
x_N)$. If $x_i \in \edgeknowns$ for all $x_i$, then $y \in \edgeknowns$.
\end{theorem}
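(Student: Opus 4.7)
The plan is to unfold Definition~\ref{defn:KE} directly: to show $y \in \edgeknowns$, I must argue that for every realizable trace $\trace \in \alltraces$ with $\edge \in \trace$, the variable $y$ is already known or guaranteed to become known every time the execution corresponding to $\trace$ traverses $\edge$. So I would fix an arbitrary such trace $\trace$ and reason pointwise about a single traversal of $\edge$.

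Next I would apply the hypothesis. Since each $x_i \in \edgeknowns$, Definition~\ref{defn:KE} guarantees that for this same trace $\trace$, each $x_i$ is already known or becomes known by the time (or after) $\trace$ traverses $\edge$. Stitching these finitely many facts together, there is a point in $\trace$ (at or after the traversal of $\edge$) at which the full tuple $(x_1,\ldots,x_N)$ is known to the attacker. The key step is then to appeal to forward solvability plus SSA: because $\inst$ is forward solvable, $y = f(x_1,\ldots,x_N)$ has a unique solution in $y$ given any concrete assignment to the inputs, and because the program is in SSA form, this equation is the \emph{unique} defining equation for $y$ globally, so it can be exploited regardless of where $\edge$ sits in the control-flow graph. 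Hence $y$ is inferable from other known variables by a polynomial-time algorithm, which by Definition~\ref{defn:known} means $y$ is known at that point in $\trace$.

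The main obstacle I would need to be careful about is the possibility that $\inst$ itself does not appear in $\trace$ — the theorem allows $\inst$ to come from \emph{anywhere} in the control-flow graph, not necessarily from a block traversed by $\trace$. If $\inst$ is not traversed, $y$ is never architecturally defined along this trace and so any knowledge of $y$ is of the \define{vacuous} flavor the paper flags in Section~\ref{sec:non-spec-knowledge}. I would resolve this by observing that Definitions~\ref{defn:known} and~\ref{defn:KE} are stated purely in terms of inferability from other known variables, not in terms of whether $y$ is ever materialized, so vacuous inference still satisfies the definition and is in fact the reason SSA uniqueness (rather than trace-local reaching definitions) is the right hammer here. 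The remainder is routine: since $\trace$ was arbitrary, $y \in \edgeknowns$, completing the proof.
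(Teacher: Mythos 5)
Your proposal is correct and follows essentially the same route as the paper's proof: unfold Definition~\ref{defn:KE} on an arbitrary realizable trace traversing $\edge$, use the hypothesis to get all $x_i$ known at (or guaranteed after) the traversal, and then use forward solvability together with the public program (and SSA uniqueness of the defining equation) to infer $y$, concluding $y \in \edgeknowns$. Your extra remarks about stitching the ``guaranteed to become known'' times and about vacuous knowledge when $\inst$ is off-trace are refinements the paper's one-paragraph proof leaves implicit, but they do not change the argument.
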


\begin{theorem}\label{thm:KE_bwd_intra}Take any edge $\edge \in \alledges$. Consider a backward solvable
instruction $\inst$ from anywhere in the control-flow graph, and suppose it is of the form $y=f(x_1, \dots,
x_N)$. For any $j \in \set{1,\dots,N}$, suppose we have all $x_{i\neq j} \in \edgeknowns$ as well as $y \in
\edgeknowns$. Then $x_j \in \edgeknowns$.
\end{theorem}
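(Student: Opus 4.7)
The plan is to unfold Definition~\ref{defn:KE} directly and then do a case analysis on whether the instruction $\inst$ appears on the particular trace under consideration. I fix an arbitrary realizable trace $\trace \in \alltraces$ with $\edge \in \trace$, together with an execution that realizes $\trace$, and show that $x_j$ is known (or becomes known) whenever that execution traverses $\edge$.

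The main case is when $\inst$ lies on $\trace$. Here, at the moment $\inst$ executes, all of $x_1,\dots,x_N$ and $y$ hold concrete values related by the equation $y = f(x_1,\dots,x_N)$, which is part of the program text and is therefore part of the attacker's static knowledge of $\programraw$. By the hypotheses $y \in \edgeknowns$ and $x_{i\neq j} \in \edgeknowns$ together with Definition~\ref{defn:KE}, the attacker can determine (perhaps only after $\edge$ has been traversed) the concrete values of $y$ and of every $x_{i\neq j}$ on this execution. Backward solvability of $\inst$ then yields a unique value for $x_j$, so $x_j$ becomes known on $\trace$ as required.

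The remaining case is when $\inst$ does not appear on $\trace$. Because the program is in SSA form and $\inst$ is the unique definition of $y$, the variable $y$ is never assigned on $\trace$; correspondingly the claim ``$y$ is known on $\trace$'' is only vacuously true in the sense introduced in Section~\ref{sec:non-spec-knowledge}. I then argue that $x_j$ is itself at worst vacuously known on $\trace$, so $x_j \in \edgeknowns$ still holds for this trace.

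The step I expect to be the most subtle is this second case: one needs to rule out the worrisome scenario in which $\inst$ is absent from $\trace$ yet $x_j$ is genuinely defined and used on $\trace$ through some other instruction, since then backward solvability of $\inst$ buys us nothing. The cleanest route is to formalize vacuous knowledge so that Definition~\ref{defn:KE} is always satisfied when the relevant variable is undefined on the trace, and to observe that whenever $y$ has no value on $\trace$ the preconditions of the theorem are themselves being used only vacuously, so no genuine demand on $x_j$ is generated. This mirrors the handling of unreachable or unexecuted instructions in Theorems~\ref{thm:KE_transmitter} and~\ref{thm:KE_fwd_intra}, and I would reuse that template verbatim here.
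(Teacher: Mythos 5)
Your main case is, almost word for word, the paper's entire proof: unfold Definition~\ref{defn:KE}, note that on any execution traversing $\edge$ the attacker knows the values of $y$ and of every $x_{i\neq j}$, and use backward solvability of the publicly known equation $y=f(x_1,\dots,x_N)$ to recover $x_j$. Where you diverge is in splitting on whether $\inst$ occurs on the trace: the paper performs no such case analysis. It instead leans on the observation of Section~\ref{sec:instructions} that, because the program is in SSA form, the defining equation of $y$ is unique and hence exploitable \emph{anywhere} in the control-flow graph, together with the convention of Section~\ref{sec:non-spec-knowledge} that knowledge attributed on paths where a variable is never defined is merely vacuous and tolerated by the model; the same two-line argument is thus applied uniformly, whether or not $\inst$ executes. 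Your extra case is not wrong-headed, but its resolution is weaker than the rest of your argument: asserting that $x_j$ is ``at worst vacuously known'' does not actually cover the scenario you yourself flag, namely $\inst$ absent from $\trace$ while $x_j$ is genuinely defined on $\trace$ (vacuous knowledge, as the paper defines it, applies only to variables undefined on the path), and ``no genuine demand on $x_j$ is generated'' is not a verification of the condition of Definition~\ref{defn:KE} for that trace. To align with the paper you would either drop the case split and invoke the SSA/global-equation convention directly, or, if you keep it, argue that in the problematic scenario the hypothesis $y\in\edgeknowns$ can itself only hold vacuously and that the paper's knowledge model deliberately accepts the resulting (inactionable) attribution to $x_j$ --- which is exactly the informal move the paper makes, just without ever surfacing the issue.
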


The following theorems describe the relationship of knowledge between edges in the general case,

\begin{theorem}\label{thm:KE_fwd_inter} Consider a block $\block$. If for some variable $v$ we have $v \in
\bigcap \edgeknowns[i]$ for all realizable $\edge[i]$ in $\inedgeset$, then we have $v \in \edgeknowns[o]$ for
every $\edge[o] \in \outedgeset$.
\end{theorem}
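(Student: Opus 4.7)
The plan is to unfold Definition~\ref{defn:KE} directly for the output edge. Fix an arbitrary $\edge[o] \in \outedgeset$ and an arbitrary realizable trace $\trace \in \alltraces$ with $\edge[o] \in \trace$; the goal is to establish that $v$ is known each time $\trace$ traverses $\edge[o]$.

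First, I would observe that every traversal of $\edge[o] = (\block, \block')$ by $\trace$ is immediately preceded by a traversal of some input edge $\edge[i] \in \inedgeset$, since the execution must have just entered $\block$. (The exceptional case $\block = \entrynode$, in which $\inedgeset = \emptyset$, is handled separately: a trace that begins at $\entrynode$ does not cross an input edge before leaving, but then the universally-quantified hypothesis is vacuous and the intended reading of the theorem presumes $\block$ is a non-entry block.) Crucially, this $\edge[i]$ is itself realizable, because it sits inside the realizable trace $\trace$.

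Second, I would apply the hypothesis: since $v$ lies in $\edgeknowns[i]$ for every realizable input edge, it lies in $\edgeknowns[i]$ for this specific $\edge[i]$. By Definition~\ref{defn:KE}, $v$ is known at the moment $\trace$ traverses $\edge[i]$. Third, I would appeal to a monotonicity property of knowledge: per Definition~\ref{defn:known}, once a variable is guaranteed to be passed to a non-speculative transmitter, or is inferable from other known variables, that status cannot be undone as the trace continues. Since $\edge[i]$ is traversed immediately before $\edge[o]$, $v$ is still known at the traversal of $\edge[o]$, which gives the desired conclusion.

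The main obstacle I expect is articulating the monotonicity of knowledge cleanly; it is implicit in Definitions~\ref{defn:known}--\ref{defn:KE} but not stated as a standalone lemma, so the proof must invoke it directly from the semantics of ``known.'' A secondary subtlety, which I would flag briefly but not dwell on, is the entry-block corner case noted above, where the empty intersection makes the hypothesis trivially true but the conclusion requires the implicit restriction to blocks with at least one realizable predecessor.
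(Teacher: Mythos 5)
Your proof is correct and takes essentially the same route as the paper's: fix a realizable trace through $\edge[o]$, observe that it must have entered $\block$ via some realizable input edge, apply the hypothesis and Definition~\ref{defn:KE} there, and let the persistence of knowledge carry over to the traversal of $\edge[o]$. Your explicit treatment of the $\entrynode$ corner case (empty input-edge set) and of repeated traversals is a minor tightening of points the paper's proof leaves implicit.
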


\begin{theorem}\label{thm:KE_bwd_inter} Consider a block $\block$. If for some variable $v$ we have $v \in
\bigcap \edgeknowns[o]$ for all realizable $\edge[o]$ in $\outedgeset$, and if $v$ is not defined in $\block$,
then we have $v \in \edgeknowns[i]$ for every $\edge[i] \in \inedgeset$.
\end{theorem}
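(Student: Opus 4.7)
The plan is to unfold Definition~\ref{defn:KE} at an arbitrary input edge and argue that every realizable trace through that edge must subsequently traverse a realizable output edge of $\block$, at which point the hypothesis on $\outedgeset$ supplies the required knowledge of $v$.

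First I would fix an arbitrary $\edge[i] \in \inedgeset$ together with an arbitrary realizable trace $\trace \in \alltraces$ containing $\edge[i]$, and select any specific traversal of $\edge[i]$ in $\trace$. This traversal causes execution to enter $\block$, and because $\trace$ is realizable each visit to $\block$ is matched by a subsequent control transfer that carries $\trace$ along some edge $\edge[o] \in \outedgeset$. Since $\edge[o]$ appears in the realizable trace $\trace$, $\edge[o]$ is itself realizable, and the hypothesis gives $v \in \edgeknowns[o]$. Applying Definition~\ref{defn:KE} to $\edge[o]$ along $\trace$, $v$ is either already known when $\trace$ crosses $\edge[o]$ or becomes known later in $\trace$. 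Because that traversal of $\edge[o]$ occurs strictly after the selected traversal of $\edge[i]$, $v$ is guaranteed to become known at or after the given traversal of $\edge[i]$, which is precisely what Definition~\ref{defn:KE} requires in order to conclude $v \in \edgeknowns[i]$. Since both $\edge[i]$ and $\trace$ and the particular traversal were arbitrary, we obtain $v \in \edgeknowns[i]$ for every $\edge[i] \in \inedgeset$.

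The hypothesis that $v$ is not defined in $\block$ plays a subtle but essential role: because $\block$ does not (re)define $v$, the SSA name $v$ refers to the same underlying value on both sides of $\block$, so knowledge of $v$ after leaving $\block$ pulls back to knowledge of $v$ upon entering $\block$. Without this assumption, $v$ might first come into existence inside $\block$ (for instance via a \phifunction), in which case asserting $v \in \edgeknowns[i]$ would be vacuous or incoherent because $v$ does not yet refer to a value at $\edge[i]$. The main obstacle I anticipate is the corner case in which execution enters $\block$ via $\edge[i]$ but never leaves it, such as a self-edge loop confined to $\block$, or the degenerate case $\block = \exitnode$; for the self-edge case the self edge lies in both $\inedgeset$ and $\outedgeset$ so the hypothesis applies directly to that edge, while the $\exitnode$ case reduces the premise to a vacuous statement about an empty $\outedgeset$ and falls outside the inductive backward-propagation setting the theorem is aimed at.
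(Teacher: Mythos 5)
Your argument is essentially the paper's own proof: fix a realizable trace through the input edge, observe that it must subsequently traverse some (realizable) output edge of $\block$, apply the hypothesis there, and pull knowledge of $v$ back to the input edge via the ``guaranteed to become known'' clause of Definition~\ref{defn:KE}, with the non-definition of $v$ in $\block$ ensuring this pullback is meaningful. Your added remarks on the never-exits/$\exitnode$ corner case and on why the non-definition hypothesis matters go slightly beyond what the paper's proof spells out, but they do not change the route.
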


An important thing to notice about Theorem~\ref{thm:KE_bwd_inter} is that we mandate $v$ is not defined in
$\block$. This is because if $v$ is \emph{not} defined in $\block$, then nothing in $\block$ can change $v$'s
status in terms of knowledge; this is not true if $v$ is defined in $\block$. That said, the theorem does
\emph{not} prevent variables from being known prior to their definition; they just need to be inferable (via the
other theorems).

The following theorems describe the relationship of knowledge between edges in the special case that they are
connected via a \phifunction.

\begin{theorem}\label{thm:KE_phi_fwd} Let $\inst[\phi]$ denote a \phifunction of the form $y =\allowbreak
\phi(x_1,\allowbreak\dots,\allowbreak x_N)$ in block $\block$ with $N$ input edges $\edge[1],\dots,\edge[N]$.
The semantics of $\inst[\phi]$ are such that $y = x_i$ if $\edge[i] \in \inedgeset$ is traversed to reach
$\block$. If for all realizable $\edge[i]$ we have $x_i \in \edgeknowns[i]$, then for every output edge
$\edge'$, $y \in \edgeknowns[\edge'][]$.
\end{theorem}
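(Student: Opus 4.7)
The plan is to unfold Definition~\ref{defn:KE} for $y$ on each output edge $\edge'$ and then use the semantics of $\inst[\phi]$ combined with publicness of the PC to transfer knowledge from each input edge across the block. Concretely, I fix an arbitrary realizable trace $\trace \in \alltraces$ with $\edge' \in \trace$, and I must show that $y$ is (or becomes) known each time $\trace$ traverses $\edge'$.

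First I would localize the argument to a single traversal of $\edge'$. Since $\edge'$ is an output edge of $\block$, each time $\trace$ traverses $\edge'$ at time $k$, control must have entered $\block$ via some input edge $\edge[i] \in \inedgeset$ at a time $k' < k$ (specifically, $\edge[i]$ is traversed just before $\inst[\phi]$ executes on this visit to $\block$). Since $\edge[i]$ lies on the realizable trace $\trace$, it is realizable, so the hypothesis guarantees $x_i \in \edgeknowns[i]$. By Definition~\ref{defn:KE} applied to $\edge[i]$, the variable $x_i$ is known at (or before) time $k'$ in $\trace$.

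Next I would invoke the semantics of the \phifunction together with the assumption that the PC is public (Section~\ref{sec:security_goal}). Because the attacker observes the PC sequence, the attacker knows which input edge $\edge[i]$ was used to reach $\block$ on this visit, and hence knows that $\inst[\phi]$ assigns $y = x_i$. Combined with the fact that $x_i$ is already known at time $k'$, this makes $y$ known by the time $\inst[\phi]$ executes, which is strictly before $\edge'$ is traversed at time $k$. Since this reasoning applies to every traversal of $\edge'$ in every trace containing $\edge'$, Definition~\ref{defn:KE} gives $y \in \edgeknowns[\edge'][]$. As $\edge'$ was an arbitrary output edge, the claim follows.

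The main obstacle I anticipate is making the loop case watertight: a single trace may visit $\block$ many times, and each traversal of $\edge'$ needs its own incoming edge $\edge[i]$ to supply a known $x_i$. The argument above handles this by pairing each traversal of $\edge'$ with the immediately preceding traversal of some input edge, but stating this cleanly requires care about ``the input edge used on this visit''. A secondary subtlety is that some $\edge[i]$ may be unrealizable, in which case the hypothesis says nothing about $x_i$ on $\edge[i]$; this is harmless because an unrealizable $\edge[i]$ can never appear in $\trace$, so it never supplies the connection to $\edge'$ in the above case analysis. Vacuous knowledge does not arise here either, since we only reason about realizable traces that actually execute $\inst[\phi]$.
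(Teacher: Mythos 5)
Your proof is correct and follows essentially the same route as the paper's: pick the input edge actually traversed before $\edge'$ on a realizable trace, use the hypothesis to get $x_i$ known there, and use the \phifunction semantics together with the public PC to conclude $y$ is known when $\edge'$ is traversed. Your per-traversal bookkeeping for loops and the remark about unrealizable input edges are slightly more careful than the paper's wording (which instead invokes Theorem~\ref{thm:KE_fwd_intra} on the chosen input edge), but the underlying argument is the same.
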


\begin{theorem}\label{thm:KE_phi_bwd} Let $\inst[\phi]$ denote a \phifunction of the form $y =\allowbreak
\phi(x_1,\allowbreak\dots,\allowbreak x_N)$ in a block $\block$ with input edges $\edge[1],\dots,\edge[N]$. The
semantics of $\inst[\phi]$ are the same as in Theorem~\ref{thm:KE_phi_fwd}. If for all realizable output edges
$\edge'$ we have $y \in \edgeknowns[\edge'][]$, then for every $\edge[i]$ we have $x_i \in \edgeknowns[i]$.
\end{theorem}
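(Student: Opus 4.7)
The plan is to unfold Definition~\ref{defn:KE} directly: to show $x_i \in \edgeknowns[i]$, I need to show that for every realizable trace $\trace$ containing $\edge[i]$, the variable $x_i$ is already known when $\trace$ traverses $\edge[i]$ or becomes known somewhere later in $\trace$. So I fix an arbitrary input edge $\edge[i]$ and an arbitrary realizable trace $\trace \in \alltraces$ with $\edge[i] \in \trace$.

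Once I have such a $\trace$, I would trace its execution locally around $\block$. By definition of traversing $\edge[i]$, the execution enters $\block$ immediately after and the \phifunction $\inst[\phi]$ fires, assigning $y := x_i$ by the semantics recalled in Theorem~\ref{thm:KE_phi_fwd}. Since $\entrynode$ is the unique entry and $\block$ is not $\exitnode$ under this hypothesis (otherwise there would be no output edges to constrain), the trace must leave $\block$ via some output edge $\edge' \in \outedgeset$. Because $\edge' \in \trace$ and $\trace \in \alltraces$, the edge $\edge'$ is realizable, so the hypothesis of the theorem applies and gives $y \in \edgeknowns[\edge'][]$.

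Now I invoke the public-PC assumption from Section~\ref{sec:modeling-knowledge}. Since the sequence of program-counter values is part of the attacker's view, the attacker can observe which input edge of $\block$ was just taken; in particular, the attacker knows that this traversal entered $\block$ via $\edge[i]$ and therefore knows the symbolic identity $y = x_i$ induced by the semantics of $\inst[\phi]$. By Definition~\ref{defn:known}, a polynomial-time inference of $x_i$ from $y$ (namely, the identity map) means that knowing $y$ entails knowing $x_i$. Applying $y \in \edgeknowns[\edge'][]$: along $\trace$, $y$ is already known at or becomes known after the traversal of $\edge'$, so $x_i$ is also known by that same point. Since $\edge'$ occurs strictly after $\edge[i]$ in $\trace$, this shows that $x_i$ is known at $\edge[i]$ or becomes known later, which is precisely the condition $x_i \in \edgeknowns[i]$.

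The main obstacle, as with Theorem~\ref{thm:KE_phi_fwd}, is handling the ``vacuous knowledge'' subtlety and being rigorous about the PC argument: one must justify that the attacker, upon observing the PC sequence, can identify \emph{which} branch of the \phifunction was taken (so that the symbolic identity $y = x_i$ is truly usable for inference), and one must be careful that the ``becomes known later'' slack in Definition~\ref{defn:KE} composes correctly across $\edge[i]$ and $\edge'$. Everything else is bookkeeping on traces and realizability of the local successor edge.
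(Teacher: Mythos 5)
Your proposal is correct and follows essentially the same route as the paper's proof: fix an input edge $\edge[i]$ and a realizable trace through it, observe that the trace must exit $\block$ via some (necessarily realizable) output edge $\edge'$ where the hypothesis gives $y \in \edgeknowns[\edge'][]$, and use the public-PC assumption to conclude $y = x_i$ is known to the attacker, hence $x_i$ becomes known by Definition~\ref{defn:KE}. If anything, you are slightly more careful than the paper about quantifying over all traces containing $\edge[i]$ and about why the exit edge is realizable, but the core argument is identical.
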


Note that for Theorems~\ref{thm:KE_transmitter}--\ref{thm:KE_bwd_intra}, the realizability of an edge is not
important to our model of knowledge. This is because the claims are about the \emph{local} knowledge associated
with each edge individually, and unrealizable edges only contain vacuous knowledge. On the other hand,
Theorems~\ref{thm:KE_fwd_inter}--\ref{thm:KE_phi_bwd} make use of the realizability of edges since we are
considering the relationships \emph{between} edges.

\subsection{Examples Computing Knowledge/Frontiers}
\label{sec:two-examples}

\begin{figure}[t]
    \centering
    \subfloat[][Example 1\label{fig:program-B}]{%
    \includegraphics[height=5.6cm]{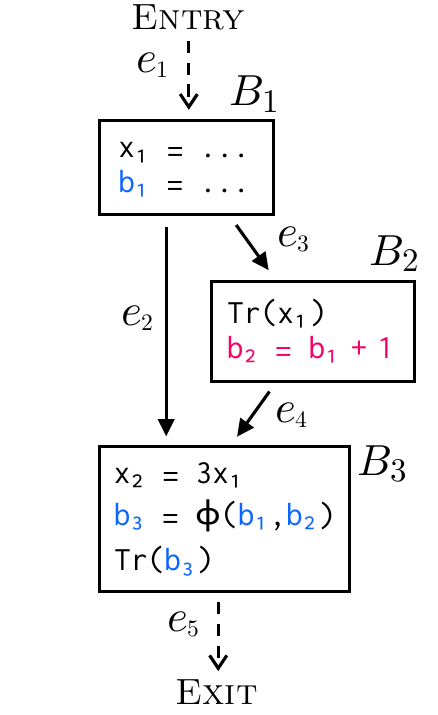}}
    \hspace{20pt}
    \subfloat[][Example 2\label{fig:program-A}]{%
    \includegraphics[height=5.6cm]{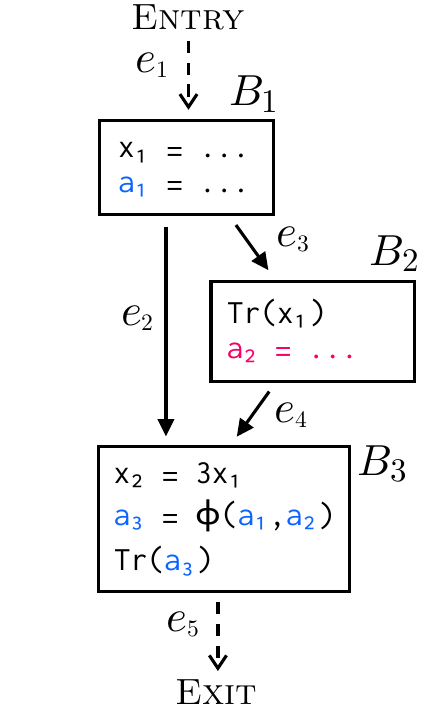}}
    \caption{Examples used to illustrate knowledge over edges ($\alledgeknowns$).}
    \label{fig:K-example}
\end{figure}

Figure~\ref{fig:K-example} shows an example of how to compute knowledge (and then the knowledge frontier) with
the relations from Section~\ref{sec:instructions}.

First consider Figure~\ref{fig:program-B}. There are two paths through the control-flow graph; we'll assume they
both correspond to realizable traces. Starting off, we have \transmit[\codevar{b}[3]] in $\block[3]$ which leads
to knowing $b_3$ on edge $\edge[5]$ (Theorem~\ref{thm:KE_transmitter}) and subsequently $b_1$ on $\edge[2]$ and
$b_2$ on $\edge[4]$ (Theorem~\ref{thm:KE_phi_bwd}). The equation $b_2 = b_1 + 1$ and
Theorem~\ref{thm:KE_fwd_intra} enable us to deduce $b_2$ everywhere $b_1$ is known (and vice versa by
Theorem~\ref{thm:KE_bwd_intra})---similarly for $x_1$, $x_2$ and $x_2 = 3x_1$. Since both $b_1$ and $b_2$ are
known on $\edge[2]$, $\edge[3]$, and $\edge[4]$, by Theorem~\ref{thm:KE_fwd_intra}, $b_3$ is also known on
$\edge[2]$, $\edge[3]$, and $\edge[4]$.

By continuing to apply the theorems, we get $\edgeknowns[1] = \emptyset$, $\edgeknowns[2] = \set{b_1, b_2,
b_3}$, $\edgeknowns[3] = \edgeknowns[4] = \set{b_1, b_2, b_3, x_1, x_2}$, and $\edgeknowns[5] = \set{b_1, b_2,
b_3}$.

This provides a basis for the frontier of $x_1$ and $x_2$ to be $\set{\block[2]}$. More importantly, it means
the frontier for $b_1,b_2,b_3$ is $\{\block[1]\}$. A program that non-speculatively enters $\block[1]$ will
inherently transmit all three. (We detail more precisely how to compute the frontier given $\alledgeknowns$ in
Section~\ref{sec:knowledge-frontier}.) This will enable more efficient protection: to enforce
Property~\ref{prop:frontier}, it is sufficient to add \protecc statements for $b_1,b_2,b_3$ solely in
$\block[1]$.

Next consider Figure~\ref{fig:program-A}, which is almost the same as Figure~\ref{fig:program-B} except for two
changes: first, every $b_i$ is replaced with $a_i$ for clarity; second (and most importantly), the equations for
$a_2$ and $b_2$ differ. We use \texttt{\codevar{a}[2]$\;$=$\;\dots$} to represent a non-deterministic
instruction. Since we no longer have an equation relating $a_2$ to $a_1$ (as we did with $b_2 = b_1 + 1$ from
before), the knowledge settles to $\edgeknowns[1] = \emptyset$, $\edgeknowns[2] = \set{a_1}$, $\edgeknowns[3] =
\edgeknowns[4] = \set{x_1, x_2, a_2}$, and $\edgeknowns[5] = \set{a_3}$. From this we can deduce that the
frontier for $a_1$ is $\emptyset$; the frontier for $x_1$, $x_2$ and $a_2$ is $\set{\block[2]}$; the frontier
for $a_3$ is $\set{\block[3]}$. This matches our security goal: it is unsafe to hoist \protecc statements above
any variable's definition because knowing $a_1$ is not the same as knowing $a_2$ and hoisting a \protecc would
enable an attacker to selectively learn both through \transmit[\codevar{a}[3]].

\subsection{Approximating Non-Speculative Knowledge}
\label{sec:approx-knowledge}

\begin{figure}[t]
\centering
\begin{minipage}[H]{.28\linewidth}
    \begin{lstlisting}[frame=r, linewidth=2.1cm]
if (q) {
  @Tr(x)@
}
if ($\text{!}$q) {
  $\;\!$@Tr(x)@
}
    \end{lstlisting}
\end{minipage}%
\hspace{13pt}
\begin{minipage}[H]{.65\linewidth}
    \includegraphics[width=\linewidth]{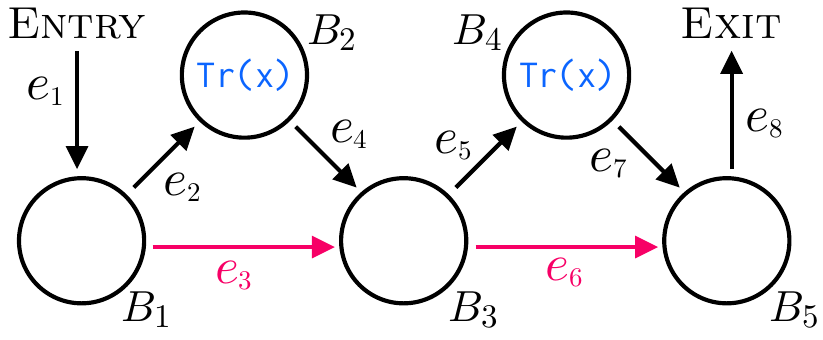}
\end{minipage}
\caption{A program and its control-flow graph. The branches are anti-correlated, thus the trace
$\{\edge[1],\pinkm{\edge[3]},\pinkm{\edge[6]},\edge[8]\}$ is not realizable.
}
\label{fig:approx}
\end{figure}

Precisely computing $\alledgeknowns$ using the theorems from the previous section is generally intractable since
it relies on knowing whether any given edge is realizable. We can instead attempt to compute an
\emph{approximation} of $\alledgeknowns$, denoted as $\alldfvs$. We consider our approximation \define{sound} if
it does not over-estimate an attacker's true knowledge; we consider it \define{imprecise} if it under-estimates
it. To compute $\alldfvs$, we make the assumption that any path through the control-flow graph corresponds to a
realizable trace. More specifically, for any edge $\edge = (\block[i], \block[j])$, we assume that any edge
$\edge' \in \inedgeset[i]$ may have been traversed prior to it, and any edge $\edge'' \in \outedgeset[j]$ may be
traversed after it. Looking back at Definition~\ref{defn:KE}, by increasing the number of traces we consider
$\edge$ to have been part of, we are (potentially) reducing the size of $\dfv$; i.e. $\dfv \subseteq
\edgeknowns$. Thus, this approximation method (potentially) loses precision, but it maintains soundness.

\paragraph{Example} Look at Figure~\ref{fig:approx}. The only possible traces are $\trace[1] = \set{\edge[1],
\edge[2], \edge[4], \pinkm{\edge[6]}, \edge[8]}$ and $\trace[2] = \set{\edge[1], \pinkm{\edge[3]}, \edge[5],
\edge[7], \edge[8]}$. (We've colored edges $\pinkm{\edge[3]}$ and $\pinkm{\edge[6]}$ to correspond with the
figure.) The paths $\{\allowbreak\edge[1], \allowbreak\pinkm{\edge[3]}, \allowbreak\pinkm{\edge[6]},
\allowbreak\edge[8]\}$ and $\{\allowbreak\edge[1], \allowbreak\edge[2], \allowbreak\edge[4],
\allowbreak\edge[5], \allowbreak\edge[7], \allowbreak\edge[8]\}$ do not correspond to realizable traces since
they imply \codevar{q} is both true and false. Ideally then, we must have that $x$ is known on $\edge[3]$, i.e.
$\edgeknowns[\pinkm{\edge[3]}][] = \set{x}$, since at this point the execution \emph{must} take $\edge[5]$ and
encounter \bluett{\transmit[x]}. Making the simplifying assumption (above) to compute $\dfv$, however, we cannot
assume whether we will take $\edge[2]$ vs. $\pinkm{\edge[3]}$; nor can we assume that we will take $\edge[5]$
vs. $\pinkm{\edge[6]}$ (for whichever of $\edge[2]$ or $\pinkm{\edge[3]}$ we took). In other words, the analysis
cannot conclude whether we will encounter \bluett{\transmit[x]} and thus deduces that $x$ is \emph{not} known on
$\pinkm{\edge[3]}$; i.e. $\dfv[\pinkm{\edge[3]}][] = \emptyset$. See that $\dfv[\pinkm{\edge[3]}][] \subset
\edgeknowns[\pinkm{\edge[3]}][]$; we've lost precision in order to gain tractability, but we have not sacrificed
soundness.
\section{\toolname Approaches}
\label{sec:design}

In this section, we describe the approach used by \toolname to compute and utilize non-speculative attacker knowledge.

At a high level, our analysis first computes $\dfv$ for efficiency reasons (Section~\ref{sec:dfa}), but invokes
more sophisticated analyses to compute $\edgeknowns$ on specific edges when doing so is deemed profitable
(Section~\ref{sec:symex}). Later subsections then detail how to use the edge-based knowledge to compute the
knowledge frontier (Section~\ref{sec:knowledge-frontier}) and instrument protection
(Section~\ref{sec:relaxation}). Section~\ref{sec:implementation} goes over lower-level details of all of the
above.

\subsection{Computing Knowledge via a Data-Flow Analysis}
\label{sec:dfa}

Our ideas from Section~\ref{sec:modeling-knowledge} map naturally to a \emph{data-flow analysis}~\cite{dragon-book}.

Data-flow analyses work by assigning \emph{data-flow values} to every point in the control-flow graph. They
iteratively apply local rules known as the \emph{data-flow equations} to build up and combine these data-flow
values. When this process has converged (i.e. the size of the data-flow values stagnates), what remains is a
\emph{data-flow solution}.

A data-flow analysis makes the assumption that any path through the control-flow graph is a potential path the
execution may take; it approximates $\alltraces$ with $\allpaths$. Since the definition of $\alldfvs$ relies on
the same assumption, we can formulate a data-flow analysis such that the solution is \emph{exactly} $\alldfvs$.
The specific data-flow equations we use are discussed in Section~\ref{sec:dfa-pass}. By construction, the result
of our ideas applied to Figure~\ref{fig:approx} (as discussed in Section~\ref{sec:approx-knowledge}) is
precisely what our data-flow analysis would yield.

\subsection{Improving Precision via Symbolic Execution}
\label{sec:symex}

There are of course drawbacks to approximating $\alledgeknowns$ as done by the data-flow analysis (see
Section~\ref{sec:approx-knowledge}). Namely, under-estimating an attacker's knowledge causes us to over-estimate
the amount of protection needed. To avoid this, we need a way to analyze programs in a way that can deduce
whether a trace is unrealizable.

To that end, we also selectively employ symbolic execution~\cite{symex-survey}. In a nutshell, symbolic
execution involves executing a program with ``symbolic'' values; variables are mapped to symbolic expressions
rather than concrete values (at least, in the cases when the concrete value cannot be deduced unambiguously).
Expressions associated with all variables are derived by the instructions encountered during the symbolic
execution. Upon reaching a branch, a symbolic execution engine will traverse both paths separately. Every path
through will have associated with it ``path constraints'', which are a set of symbolic expressions implied by
the set of taken branches.

Our framework uses symbolic execution to answer the following question,

\begin{question}\label{q:symex} Given some region $\region$ of the control-flow graph and given some variable
$x$, does there exist a path through $\region$ (that corresponds to a realizable trace) upon which $x$ is
\emph{not} transmitted?
\end{question}

Recall that a data-flow analysis conservatively answers this question by assuming that if a path exists, it is
part of a realizable trace. By considering the semantics of the instructions and the branch conditions, symbolic
execution can try and answer the question less conservatively; though we stress that it does so in a sound
manner since it needs to \emph{prove} that the path cannot be taken.

We can look back at Figure~\ref{fig:approx} to see how symbolic execution can succeed where the data-flow
analysis fails. The candidate region we consider is the entirety of the program. Recall that the problematic
path was $\path' = \{\edge[1], \pinkm{\edge[3]}, \pinkm{\edge[6]},\edge[8]\}$, which again does \emph{not}
correspond to a realizable trace. When the tool considers the path $\path'$, the path constraints will contain
\emph{both} $q = \text{false}$ and $q \neq \text{false}$. These contradictory statements mean that no
non-speculative execution could ever traverse such a path. Thus, symbolic execution will conclude that
\blue{\transmit[x]} is unavoidable (i.e. that the answer to Question~\ref{q:symex} is ``no'' for this region and
variable $x$) meaning $x$ is guaranteed knowledge at (among other places) the program's entry point. Thus, we
have achieved a more precise result than the data-flow analysis.

The details of how and when we utilize symbolic execution to answer Question~\ref{q:symex} are given in
Section~\ref{sec:symex-pass}.

Note that the symbolic execution cannot be used on its own; the results of the data-flow analysis are a
prerequisite. For symbolic execution to work, we will need to instrument the code to indicate what variables are
known at various points, and the data-flow analysis is precisely what provides this information. While it is
certainly possible to formulate the entire analysis in terms of symbolic execution, this would certainly not
scale to larger programs as well as a data-flow analysis would.

\subsection{Computing the Knowledge Frontier}
\label{sec:knowledge-frontier}

As discussed in Section~\ref{sec:opportunities}, Property~\ref{prop:frontier} is the key to finding a minimal
yet sufficient set of \protecc statements needed to secure a program. To that end, we need to compute the
knowledge frontier given the results of the data-flow analysis and/or symbolic execution.

The first step to computing the knowledge frontier is to map knowledge from edges to basic blocks; we want the
map $\allblockknowns: \allblocks \to \varlatt$. For any block $\block$, we define $\blockknowns = \bigcap
\dfv[\edge'][]$ for all $\edge' \in \outedgeset$. Recall that $\alldfvs$ is the result of the data-flow
analysis. The results from the symbolic execution constitute additions to $\allblockknowns$.
Question~\ref{q:symex} is associated with some candidate variable $x$ and some candidate region $\region$ of the
control-flow graph. If, when given these, the symbolic execution tool answers ``no'' to Question~\ref{q:symex},
then $x \in \blockknowns$ for all $\block \in \region$.

With $\allblockknowns$ in hand, we can compute the frontiers for all variables. For any variable $x$, we first
over-estimate $\frontier{x}$ by adding all $\block \in \allblocks$ such that $x \in \blockknowns$. Then, for any
$\block \in \frontier{x}$, if $x$ is known in all of $\block$'s predecessor blocks, we remove $\block$ from
$\frontier{x}$. This is done using another data-flow analysis. After this, $\frontier{x}$ represents the precise
knowledge frontier for $x$.

Consider an arbitrary program function \texttt{f}. For any variable $x$, if its frontier $\frontier{x}$ is the
entry block of \texttt{f}, we say $x$ is \define{fully declassified}. If all variables transmitted by \texttt{f}
are fully declassified, then \texttt{f} itself is fully declassified.

\subsection{Adding Protection}
\label{sec:relaxation}

Once we have the knowledge frontier $\allfrontiers$, we can place protection using a simple strategy: for every
variable $x$ that is transmitted, we place \protecc[x] statements all along its frontier $\frontier{x}$. We
refer to this approach as ``enforcing the knowledge frontier''. This is a straightforward method to satisfy
Property~\ref{prop:frontier}. In terms of hoisting \protecc statements as high as possible, it is also optimal.

For simplicity, we perform our analysis at the granularity of functions. To that end, we now discuss two
strategies---\emph{callee enforcement} and \emph{caller enforcement}---that an analysis can use to instrument
protection when considering calls between functions. We use both of these in our final implementation.

\paragraph{Callee enforcement}
Callee enforcement is a straightforward but potentially high overhead strategy. Suppose we have a program
function \texttt{f}, which calls \texttt{g}. With callee enforcement, \texttt{f} and \texttt{g} are analyzed and
instrumented with protections in isolation. That is, \texttt{g} (the callee) is protected regardless of
knowledge in the caller context \texttt{f}. This approach allows us to ignore the interactions between functions
and have every function focus on enforcing its own frontier in isolation. While simple, this strategy may
overprotect the callee. For example, all variables in \texttt{g} that require protection may be known before
\texttt{g} is called.

\paragraph{Caller enforcement}
To reduce overhead stemming from callee protection, we now consider an alternative approach called caller
enforcement. The high-level idea is that, if certain conditions about the callee are met, the caller can
abstractly view the callee as any other transmitter. We call these \emph{pseudo transmitters}: program-level
functions that leak (a subset of) their arguments but no internal data. More precisely, consider a function
$f(x_1,\dots,x_N)$ that non-speculatively leaks some non-empty subset of its arguments $x_1', \dots, x_M'$ (with
$M \leq N$) (possibly) along with some other variables $v_1, \dots, v_K$.

The function $f$ is a \define{pseudo transmitter} if and only if: \circled{1} $x_1', \dots, x_M'$ and $v_1,
\dots, v_K$ are all fully declassified in $f$; \circled{2} knowledge of $x_1', \dots, x_M'$ is sufficient to
infer every $v_j$; \circled{3} all other functions called by $f$ are themselves pseudo transmitters. The key
insight is that the information leaked by a pseudo transmitter can be understood strictly in terms of its
arguments, which means we can reason about its protection in its calling contexts.\footnote{We do not have this
luxury with functions that are not pseudo transmitters even if they are fully declassified. Their internal
leakages mandate that every call to the function creates its own personal frontier.} Rather than protecting
every function call, we can enforce the frontier of the \emph{arguments} that are (non-speculatively) leaked by
each function call. If calls to the same function have arguments that are connected via data flow, we can
exploit this to protect their common frontier.

To see the benefits of caller enforcement, consider the following example. Suppose some program function
\codevar{f} makes two calls to another function \func{g}[][x] that is a pseudo transmitter which leaks its
argument \texttt{x} and no internal variables. Suppose the first call to \texttt{g} is \func{g}[1][x] and the
second call is \func{g}[2][$\texttt{x}'$] where \texttt{$\text{x}'=\text{x} + \text{1}$} (the subscripts are
used to distinguish the calls). Suppose further that \codevar{g}[1] dominates \codevar{g}[2]. Both calls leak
their arguments, so naively, the frontier for \texttt{x} is the calling context of \codevar{g}[1] while the
frontier of $\texttt{x}'$ is the calling context of \codevar{g}[2]. However, since \texttt{x} and $\texttt{x}'$
are equivalent in terms of knowledge (due to the backward solvability of \texttt{$\text{x}'=\text{x} +
\text{1}$}), and since \codevar{g}[1] dominates \codevar{g}[2], we can promote the frontier of $\texttt{x}'$ to
that of \texttt{x}. Enforcing the frontier of \texttt{x} is sufficient to protect the call to
\func{g}[2][$\texttt{x}'$]. Thus one \protecc can cover two function calls. This is strictly better than callee
enforcement which would have protected \codevar{g}[1] and \codevar{g}[2] separately. Note that being a pseudo
transmitter is both sufficient and necessary for a function to be caller enforced.

\section{Implementation Details}
\label{sec:implementation}

We now give details for how the ideas from Sections~\ref{sec:modeling-knowledge} and \ref{sec:design} are
implemented.

\toolname's main components follow the sketch given in Section~\ref{sec:design}, and we adopt the following
strategy for combining the data-flow analysis, symbolic execution and protection steps together. First, we apply
a data-flow analysis (Section~\ref{sec:dfa-pass}). Second, if the data-flow analysis results are suboptimal
(i.e. we have functions that are not fully declassified), we analyze those functions using the symbolic
execution tool \klee (Section~\ref{sec:symex-pass}). After running one or both analyses, we place protections
(Section~\ref{sec:relaxation-pass}). We apply all 3 passes in that order to every function.

Our framework contains both intra- and inter-procedural aspects. By design, \klee is both intra- and
inter-procedural.\footnote{That is, \klee will symbolically execute the top level function we provide it as well
as any callees, and so on recursively.} Thus, our symbolic execution pass adopts both these traits. The
data-flow analysis and protection pass are primarily intra-procedural but are performed on functions in a
specific order so as to pass non-speculative knowledge computed in a callee up the call graph. Specifically, we
apply the analysis in the order of callees then callers, i.e., work up the call graph starting at the
leaves.\footnote{This approach assumes that the call-graph is acyclic, i.e. there is no recursion. In cases
where this does not hold, we can extend our method to repeatedly analyze functions and stop when no new
information is derived. However, since we do not encounter recursion in our benchmarks, we omit implementation
and further discussion.}

\subsection{The Data-Flow Pass}
\label{sec:dfa-pass}

As mentioned in Section~\ref{sec:dfa}, we can formulate a data-flow analysis that computes $\alldfvs$. We use
control-flow edges as the program points of interest, and the data-flow value for a given edge $\edge$ is
precisely $\dfv$. The data-flow rules are based on the theorems from Section~\ref{sec:instructions}.

Running the data-flow analysis is done in two steps. First, we perform an LLVM control-flow-graph-level
transformation to ensure that loops are correctly modeled. Second, we initialize data-flow values along all
program edges and iteratively apply the data-flow rules until $\dfv$ is constructed.

\begin{figure}[t]
    \centering
    \subfloat[][Original loop\label{fig:pita-loop-2}]{%
    \includegraphics[height=4.3cm]{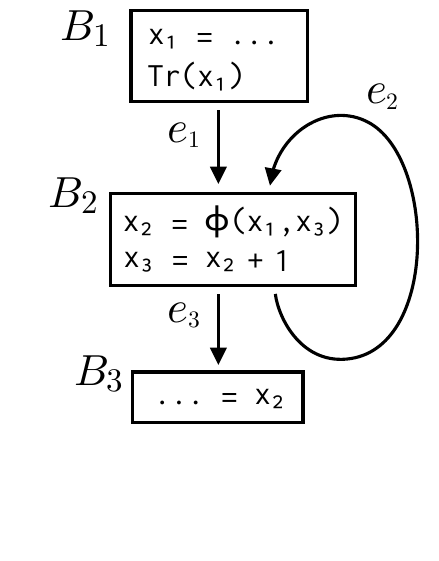}}
    \hspace{0.9cm}
    \subfloat[][Partially expanded loop\label{fig:fixed-loop-2}]{%
    \includegraphics[height=4.3cm]{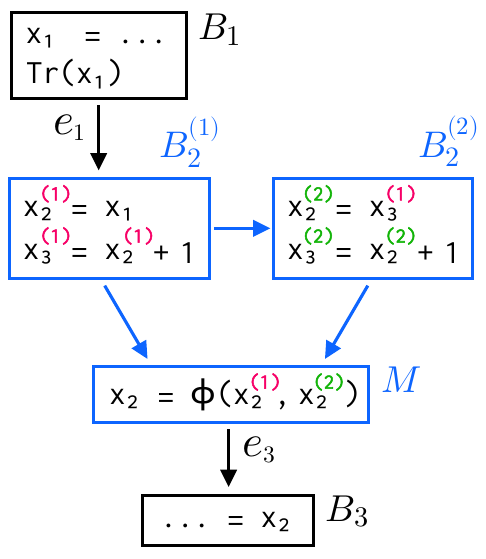}}
    \caption{On the left is a loop which creates an inductive relationship between \codevar{x}[1] and \codevar{x}[2]. Partially expanding it as shown on the right allows our data-flow analysis to capture this relationship.}
    \label{fig:loop-expansion}
\end{figure}

\paragraph{Loop transformation}
In the presence of loops, the data-flow rules can fail to capture inductive relationships (e.g., loop-carry
dependencies). For example, in Figure~\ref{fig:pita-loop-2}, they would be unable to deduce that knowledge of
\codevar{x}[1] in $\block[1]$ implies knowledge of \codevar{x}[2] in $\block[3]$. To remedy this, prior to the
data-flow analysis, we perform \define{partial loop expansion}. This procedure takes a control-flow graph with a
loop and transforms it to be acyclic. We accomplish this by duplicating the loop body and removing the back
edge. Unlike full loop unrolling, we only keep two cases: the initial case and the inductive case. We note that
although this procedure destroys the ``correctness'' of the program in terms of the values computed, it
preserves the relationships (between variables) that are relevant to modeling knowledge. This technique is
crucial for analyzing the benchmarks in Section~\ref{sec:eval}. We present more details about the partial loop
expansion procedure in Appendix~\ref{app:loop-expansion}.

\paragraph{Data-flow initialization and evaluation}
Once the program control-flow graph is transformed to account for loops, we proceed to run the data-flow
analysis.

We start by initializing data-flow values. We look at all the transmitters in the program. For all $\block \in
\allblocks$, and for every $\edge' \in \outedgeset$, we initialize $\dfv[\edge'][]$ to be the set of all
variables in $\block$ that are directly passed to a transmitter. This is a direct application of
Theorem~\ref{thm:KE_transmitter}. We also utilize some inter-procedural initialization, but with a unidrectional
flow of information.\footnote{In theory, there is benefit to augmenting the analysis to have callers send
\emph{and} receive information to/from callees. However, it complicates the analysis, and these opportunities
don't arise in our benchmarks. Thus, we leave this for future work.} Suppose we are analyzing a program function
\codevar{f} that makes a call to some other function \func{g}[][x] in block $\block$. If previous analysis of
\codevar{g} had concluded that the frontier of \codevar{x} within \codevar{g} was the entry block of
\codevar{g}, then in our analysis of \codevar{f} we may add $x$ to $\dfv$ for all $\edge \in \outedgeset$. This
step relies on \codevar{g} having been analyzed prior to \codevar{f}, which is why we must perform our analysis
in the order of callees then callers as mentioned before.

We then apply Theorems \ref{thm:KE_fwd_intra}-\ref{thm:KE_phi_bwd} iteratively until we have achieved
convergence; that is, until the size of $\alldfvs$ has reached a fixed point. Recall that the range of of our
data-flow values is $\varlatt$, which is a powerset lattice of finite height. Our data-flow rules never decrease
the size of the data-flow value, i.e. the values can never move down the lattice. Thus, there is a limit to how
much the data-flow values can grow before stagnating. This guarantees that our analysis will converge in finite
time.

\subsection{The Symbolic Execution Pass}
\label{sec:symex-pass}

We apply the symbolic execution pass on functions that are not fully declassified; our goal is to deduce
additional attacker knowledge as detailed in Section~\ref{sec:symex}. To that end, we invoke \klee~\cite{klee}
to answer Question~\ref{q:symex} with respect to a region $\region$ and variable $x$. We are interested in
analyzing regions that encompass all transmitters. Let $\transmitblocks$ be the set of all blocks which contain
at least one transmitter. Any region $\region$ such that $\transmitblocks \subseteq \region$ is a candidate
region for the symbolic execution pass. We can find these automatically by looking at all $\block \in
\allblocks$ and keeping the ones that collectively dominate every transmitter. Then, the regions defined by each
of these is a candidate region. Furthermore, any variable which is known in a block that contains a transmitter
(even if that variable is itself not transmitted in that block) is a candidate variable.\footnote{If the
data-flow analysis has deduced a candidate variable to be known throughout the candidate region, we don't need
to run \klee. Note that we do not implement this optimization in our evaluation.} That is, the set of all
candidate variables is $\bigcup \blockknowns[\block'][]$ for all $\block' \in \transmitblocks$.

We need to run \klee separately for every pair of candidate region and candidate variable. Thus, without loss of
generality, we assume for the remainder of this discussion that $\region$ is the candidate region and $x$ is the
candidate variable. By definition, there is a unique block in $\region$, which we'll denote as $\block[R]$, that
dominates all other blocks $R$.

Suppose we are analyzing a program function \texttt{f}. For ease of instrumenting the analysis (below), we
assume it has a single terminating block. If it does not, we modify the function such that it does by creating a
new terminating block and redirecting all previous ones to it. To run \klee, we write a wrapper which serves as
\texttt{main()} and calls \texttt{f} with symbolic arguments. \klee provides an interface for specifying
constraints on the values arguments can take. Setting these constraints correctly is important. For example, if
\texttt{b} in \texttt{f(int* a, int b)} denotes the length of an array pointed to by \texttt{a}, one should
constrain \texttt{b} to be non-negative. Automating deriving argument semantics for this step would be useful,
but we consider it out of scope for this paper.

``Asking'' \klee if the answer to Question~\ref{q:symex} is ``yes'' or ``no'' is done by using an \assert
statement. The assertion is that the answer to Question~\ref{q:symex} is ``no'': the paths upon which $x$ is not
transmitted (if any) are not realizable. To accomplish this, we instrument the program with flags. Let
\texttt{L} denote the flag variable. For simplicity, assume \texttt{L} is not SSA, i.e., can be assigned more
than once. We add the initialization \texttt{L$\;=\;$0} in the entry block of the function \texttt{f}. We add
the assignment \texttt{L$\;=\;$-1} in $\block[R]$. In every $\block' \in \transmitblocks$, we add the assignment
\texttt{L$\;=\;$1}. In the unique terminating block of the function, we add \assert[L$\;\neq\;$-1]. If \klee
manages to find a counterexample to this assumption, then there is some path through $\region$ such that $x$ is
not transmitted. Thus, the answer to Question~\ref{q:symex} is ``yes''. On the other hand, if \klee deduces that
the assertion is provably true, then every realizable trace either circumvents the region $\region$, or enters
it and necessarily transmits $x$, rendering it known. Thus, $x$ is known throughout $\region$.\footnote{One
might worry that this method does not check whether the former is always true, i.e. whether $\region$ is even
ever entered. We do not need to do so since if $\region$ is never entered, knowledge of $x$ in $\region$ is
vacuous and thus inactionable.}

% This is here so that it appears on the previous page
\begin{figure*}[t]
    \centering
\includegraphics[height=3.2cm]{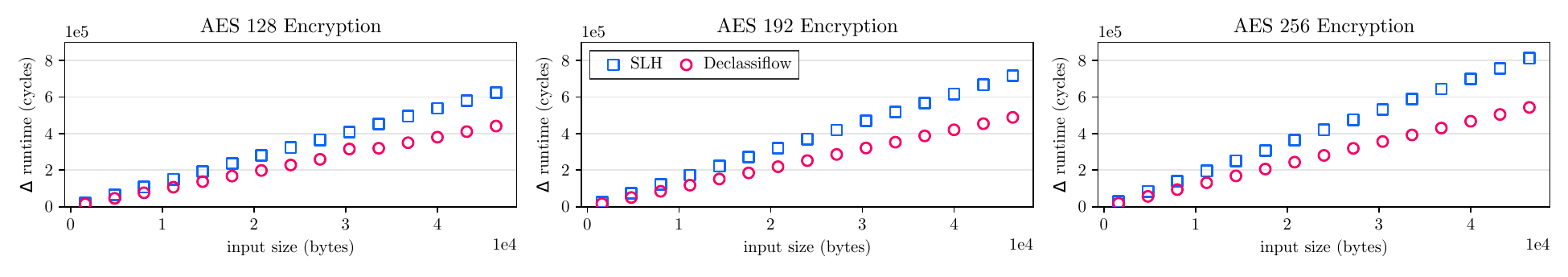}\vspace{-10pt} \caption{The results of running AES encryption
with three different key sizes on inputs of various lengths. We show the raw difference between the runtime (in
cycles) of the functions compiled with SLH and the runtime of the baseline (insecure) code. We show the same
difference for the \toolname protected versions of the functions, i.e., after our analysis is applied. The
overhead reduction from the SLH-enabled code to our protected code grows with the size of the input.}
    \label{fig:eval-aes}
    \vspace{10pt}
\end{figure*}

\subsection{The Protection Pass}
\label{sec:relaxation-pass}

We enforce Property~\ref{prop:frontier} with speculation barriers, which we denote as \specbarr. These barriers
delay the execution of younger instructions until they are non-speculative. \specbarr conceptually implements
\protecc[*]; it indiscriminately applies protection for every variable. That means if the knowledge frontiers
for two variables $x$ and $y$ both include some block $\block$, only one \specbarr needs to be added to $\block$
to enforce the frontier for both $x$ and $y$. Since \specbarr itself is relatively heavyweight, our main tactic
to get speedup will be to determine that frontiers (and therefore \specbarr placement) fall outside of critical
loops. In that case, for sufficiently long loops, the cost of the \specbarr will be amortized.

We now describe the procedure for placing \specbarr for a program function \codevar{f}. We first clone
\codevar{f} to produce \codevarprime{f}. The latter is what we refer to as the \define{protected} version of
\codevar{f}. Without loss of generality, suppose that \texttt{f} (and thus \codevarprime{f}) only makes calls to
one other function \func{g}[][x] which leaks its argument. Assume a protected version of \texttt{g} exists,
denoted \funcprime{g}[][x].

We now describe the procedure to protect the internals of \codevarprime{f}. We compute the joint frontier of
locally transmitted variables, i.e. $F_\text{local} = \bigcup \frontier{v}$ for all \codevar{v} in \texttt{f}.

We next need to get the joint frontier of all variables that are not necessarily locally transmitted, but are
transmitted by calls to other functions; we denote these as $F_\text{func}$. In this example, this would be from
calls to \codevar{g}. We can replace calls to \func{g}[][x] with calls to \funcprime{g}[][x].

The manner in which we enforce protection of \codevarprime{g} depends on whether it is a pseudo transmitter or
not. Suppose it is a pseudo transmitter and thus can be \emph{caller} enforced. For purposes of disambiguation,
let \codevar{x}[i] denote the argument to the \ith call site of \funcprime{g}[][x]. We need to enforce the union
of the frontiers for the leaked \codevar{x}[i]. We compute $F_\text{func} = \bigcup \frontier{x_i}$.\footnote{We
can generalize our discussion to multiple functions being called with multiple arguments by expanding this term.
If there are other functions called by \codevarprime{f} that require caller enforcement, we can add the union
the frontiers of their leaked arguments to this term. If any of the pseudo transmitters called leak multiple
arguments, again, the union of the frontiers of those leaked arguments is added to this term.} If
\codevarprime{g} is \emph{not} a pseudo transmitter, it will be \emph{callee} enforced, and so it will be
internally protected, and we would have $F_\text{func} = \emptyset$.

We now place \specbarr's in \codevarprime{f}. If \codevarprime{f} itself is \emph{not} a pseudo transmitter,
then it must be \emph{callee} enforced. In this case, a \specbarr is placed in every block in $F_\text{local}
\cup F_\text{func}$. However, if \codevarprime{f} is a pseudo transmitter, and thus can be \emph{caller}
enforced, then a \specbarr needs to placed in all those same blocks \emph{except} the entry block of the
function.\footnote{This is because the entry of \codevarprime{f} will be protected at the call site. Note that
if \codevarprime{f} is a top-level function and has no callers, we must place the \specbarr in the entry block.}
Note, if \codevarprime{f} is a pseudo transmitter, then $F_\text{local} \cup F_\text{func}$ is precisely the
entry block of \codevarprime{f}. As a result, \emph{no} \specbarr's will be placed in \codevarprime{f}. This
leads to a useful optimization: if we have a chain of nested calls of pseudo transmitters, we need only a single
\specbarr at the top level to protect the full call chain.

This unidirectional inter-procedural strategy is a fairly greedy method for minimizing the total number of
\specbarr's in the program. Taking a global view of the relationship between functions will undoubtedly lead to
better barrier placement strategies. However, estimating the cost of barriers will most likely rely on
heuristics, and so we leave such a problem to future work.

\section{Evaluation}
\label{sec:eval}

We implement our analysis as an LLVM pass that interoperates with KLEE. We will now evaluate the analysis'
effectiveness in terms of how it can efficiently protect constant-time workloads.

\paragraph{Workloads}
We evaluate 3 constant-time workloads: \circled{1} The AES block-cipher encryption from the \texttt{ctaes}
repository under the Bitcoin Code organization found on Github~\cite{ct-aes}. \circled{2} The Djbsort
constant-time\footnote{It is constant-time with respect to the \emph{values} within the array, \emph{not} the
array size.} integer sorting algorithm~\cite{djbsort,djbsort-paper}. \circled{3} The ChaCha20 stream cipher
encryption from the BearSSL library~\cite{bearssl_chacha20}. For each, we applied \toolname as described in
Sections~\ref{sec:design} and \ref{sec:implementation}.

\paragraph{Baselines}
We compare against each benchmark unmodified and also to each benchmark compiled with Speculative Load Hardening
(SLH)~\cite{slh}. SLH is a Spectre mitigation deployed by LLVM that works by accumulating branch predicate state
and using that state to prevent certain instructions from executing speculatively and/or to prevent certain data
from being forwarded speculatively. The academic community has shown how SLH, when configured to delay the
speculative execution of all transmitters, is sufficient to protect non-speculatively accessed
data~\cite{SSLH,USLH}. We compare to a weaker variant, called ``address SLH'' or aSLH. aSLH was designed to
protect speculatively-accessed data. It does this by delaying the execution of speculative loads, that have
addresses only known at runtime, until they are non-speculative. That is, it considers loads to be instructions
that access (return) sensitive data. Since loads are also transmitters, aSLH can be viewed as implementing a
subset of the mechanisms required to protect non-speculatively accessed data. Thus, the security provided by
\toolname is stronger than aSLH and aSLH's overhead will underestimate the true overhead of SLH in our setting.

All workloads compiled using \toolname maintain the branch predicate state (but do not use it to delay
instructions/data-flows) needed to support SLH. We do this to provide a conservative overhead estimate: SLH
requires this information be maintained across function calls, thus the benchmarks we protect with \toolname can
interoperate with SLH implemented in a calling context, if needed.

\begin{figure}[t]
    \centering
    \includegraphics[height=3.3cm]{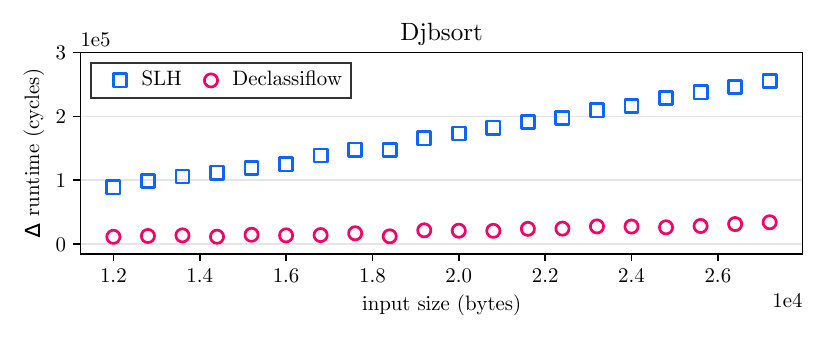}\vspace{-10pt}
\caption{The results of running Djbsort on inputs of various lengths. The
y-axis follows the same convention as in Figure~{\ref{fig:eval-aes}}.}
    \label{fig:eval-djbsort}
\end{figure}

\begin{figure}[t]
    \centering
    \includegraphics[height=3.3cm]{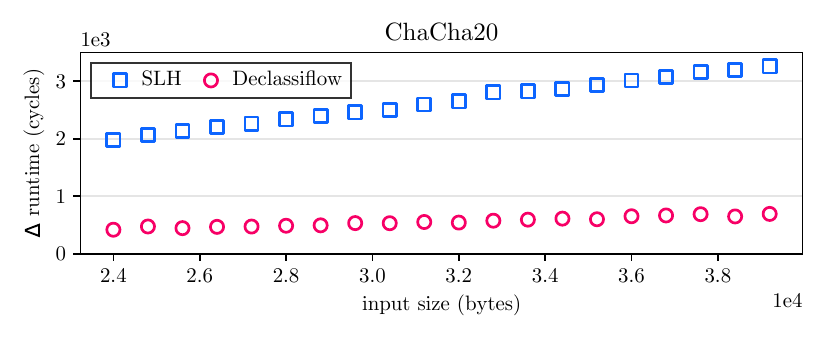}\vspace{-10pt}
\caption{The results of running ChaCha20 on inputs of various lengths.  The y-axis
follows the same convention as in Figure~{\ref{fig:eval-aes}}.}
    \label{fig:eval-chacha20}
\end{figure}

\paragraph{Environment setup}
We run our experiments on an x86-64 Intel Xeon Gold 6148 machine with Ubuntu 20.04, kernel version
5.4.0-146-generic. We compiled our benchmarks with \texttt{-O3} and ensured that all versions of the binary
differ only in their protection mechanisms. The AES source files were compiled with LLVM version 16 while the
Djbsort and ChaCha20 source files were compiled with LLVM version 11.\footnote{These benchmarks needed to be run
with KLEE, the Docker image for which uses LLVM version 11.} We run \klee in a Docker container based on their
official image.\footnote{\texttt{https://hub.docker.com/r/klee/klee/}}

\paragraph{Analysis procedure} To produce protected code, we run the 3 phases of our analysis: \circled{1} The
data-flow analysis is run to determine $\alldfvs$. \circled{2} \klee is used to compute additions to $\alldfvs$.
\circled{3} The protection pass is run, which adds barriers. If \klee is not needed to improve the precision of
the result of phase \circled{1}, phase \circled{2} may be skipped. For the benchmarks which do go through phase
\circled{2}, we provide a static buffer of fixed size that contains symbolic values. We also provide a symbolic
value that represents the buffer's \emph{dynamic} length. We constrain this symbolic value to be as low as 0 and
as high as the length of the static buffer. This static buffer coupled with the length represent the user input
to the functions.

\paragraph{Benchmarking methodology} For constant-time AES encryption,\footnote{We do not benchmark decryption
since the structure (and thus the performance) is nearly identical to encryption.} we benchmark the top-level
functions \texttt{AES128\_encrypt}, \\\texttt{AES192\_encrypt}, and \texttt{AES256\_encrypt}, which use key
sizes of 128, 192, and 256 bits respectively. We benchmark each of these with inputs of various sizes. The
Djbsort and ChaCha20 benchmarks only have one function each that performs the core workload. Thus, we benchmark
those two functions, again with inputs of various sizes. Before every call to the function under test, we
perform a small series of floating point computations which get their input from disparate locations in memory.
The function under test is then only executed if the result is non-zero. This is meant to introduce a branch
that is always taken and will be easily predicted as such. The induced speculation is meant to test the effects
of the \specbarr's we place. We use \texttt{rdtscp} to measure timing. To amortize timer function overhead, we
time the function calls in groups of 8 and then normalize the result. This batch-of-8 timing represents one
``trial''. For every data point, we run 800 trials and discard the first 100 to remove warmup effects of the
cache and TLB. We then report the median of the 700 remaining trials.

\subsection{Main Result}

The full experimental results are presented in Figures~\ref{fig:eval-aes}, \ref{fig:eval-djbsort}, and
\ref{fig:eval-chacha20}. We show the raw difference between the SLH and the \toolname versions of the functions
with respect to the baseline. The geometric means of the relative overheads of SLH for the encryption functions
are 16\%, 16\%, and 17\% for each key size respectively. Meanwhile, the geometric means of the relative
overheads of the \toolname versions of the functions are 12\%, 12\%, and 11\% respectively. For the Djbsort
benchmark, the geometric mean of the relative overhead of SLH is 24\%, while the geometric mean of the relative
overhead of the \toolname version is 3\%. For the ChaCha20 benchmark, the geometric mean of the relative
overhead of SLH is 7\%, while the geometric mean of the relative overhead of the \toolname version is 1\%.

All three benchmarks make heavy use of loops. Thus, the overhead of SLH increases with the size of the input
because the SLH instrumentation is repeatedly encountered. Our analysis tries to prove that the knowledge
frontier is outside of the inner loops (ideally, outside of all loops), hence decreasing the frequency that
protection mechanisms are encountered. With AES, our analysis discovers that the frontier is outside of just the
innermost loops. Thus, while the overhead of the \toolname versions grow with input size, they grow slower than
that of the SLH-protected version. With Djbsort and ChaCha20, our analysis discovers the frontier is completely
outside the main loop. Thus, the overhead of the \toolname versions are low and constant. In the \toolname
version of each of the three benchmarks, only a \emph{single} \specbarr is statically inserted; though for AES
the barrier is placed inside a loop, so it is encountered multiple times dynamically.

\paragraph{Analysis runtime} We now provide details on our analysis' runtime. Before we start, note that our
analysis is \emph{not} part of the typical code-compile-debug workflow programmers use during development.
Instead, we expect it to be run a few times (e.g., once) as a post-processing step to add security on top of
otherwise production-ready code. Thus, these overheads may amortize in practice.

The runtimes of all phases except \circled{2} are given in Table~\ref{tab:phase-times}. The runtime of phase
\circled{2} (running \klee) is over two orders of magnitude higher than the runtimes of the other phases, making
it the bottleneck. Table~\ref{tab:phase-times} also shows how the execution time of \klee scales with respect to
the complexity of the program. In this experiment, we vary the length of the static buffer mentioned previously,
and along with it the constraints on the symbolic length. This essentially changes the \emph{upper bound} on the
size of the arrays that \klee needs to reason about.

To understand better why \klee runtime scales with buffer size, we now report how buffer size impacts \klee's
execution time. Recall from Sections~\ref{sec:symex} and~\ref{sec:symex-pass} that we need to invoke \klee for
every candidate region $\regionraw$ and candidate variable $x$ for which we want to answer
Question~\ref{q:symex}. For Djbsort, there are 72 possible candidate variables and 4 candidate regions; thus, we
need $72\times4=288$ invocations of \klee. For ChaCha20, there are 91 candidate variables and 3 candidate
regions; thus we need $91\times3=273$ invocations of \klee. For both benchmarks, we found that the number of
invocations does not vary with static buffer size. Each invocation evaluates \klee expressions, which generate
what \klee calls ``query constructs.''\footnote{A query construct is a node in the tree created by a \klee
expression.  See \url{https://www.mail-archive.com/klee-dev@imperial.ac.uk/msg02341.html}.} For a given static
buffer size, the number of query constructs generated is the same across all invocations. That said, the number
of query constructs per invocation increases with the static buffer size. This is predictive of overhead;
Figure~\ref{fig:eval-scaling} shows how the time taken for an individual invocation scales closely with the
number of query constructs generated.

\begin{table}[]
\small
\begin{tabular}{l|l|l|l|l|l|l}
& \multicolumn{1}{c|}{\textbf{DFA}} & \multicolumn{4}{c|}{\textbf{Symbolic Execution}} & \multicolumn{1}{c}{\textbf{Protection}} \\
&  & \multicolumn{1}{c}{\footnotesize$N=4$} & \multicolumn{1}{c}{\footnotesize$N=8$} & \multicolumn{1}{c}{\footnotesize$N=16$} & \multicolumn{1}{c|}{\footnotesize$N=32$} &  \\
\toprule
AES & 1s & -- & -- & -- & -- &  $<\,$1s \\
Djbsort & 35s & 37m & 48m & 92m & 266m & $<\,$1s \\
ChaCha20 & 3s & 35m & 38m & 44m & 57m  & $<\,$1s
\end{tabular}
\vspace{20pt}
\caption{\label{tab:phase-times}The runtime of the data-flow analysis (DFA) phase; the protection phase; the
symbolic execution phase (end-to-end runtime for various sizes for the symbolic input buffer). Since AES doesn't
use \klee, the data is omitted. Note the difference in units in the \klee column versus the DFA and relaxation
column. The numbers are the average of 5 runs rounded to the nearest unit.}\vspace{-15pt}
\end{table}

\begin{figure}[t]
\centering
\includegraphics[height=3.5cm]{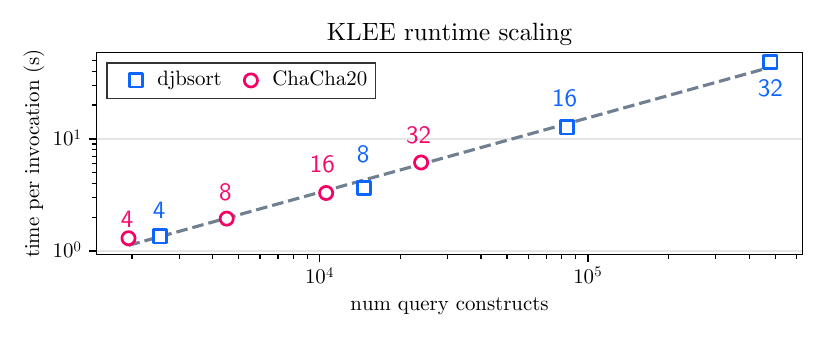}
\vspace{-20pt}
\caption{The time taken per invocation vs. the number of query constructs per invocation. Note, for any
particular symbolic buffer size, every invocation of \klee will have the same number of query constructs. We
plot the median of the time measurements of those invocations. The size of the symbolic buffer associated with
any particular datapoint (chosen to match Table~\ref{tab:phase-times}) is indicated next to it. Note that in
Table~\ref{tab:phase-times}, we are reporting the end-to-end runtime (sum of) for all invocations.}
\label{fig:eval-scaling}
\end{figure}

We spend the remainder of Section~\ref{sec:eval} looking at some of the interesting aspects of each benchmark
and explaining how the analysis responds to those aspects.

\subsection{AES Encryption}
\label{sec:ctaes}

\newsavebox{\aeslisting}
\begin{lrbox}{\aeslisting}
\begin{lstlisting}[basicstyle=\footnotesize\ttfamily,frame=r,linewidth=2.55cm,boxpos=t]
 encrypt($\codevar{y}[1]$) {

  $\;$x$\,\,$=$\,\,$...
  $\;$f(x,$\codevar{y}[1]$)
  $\;$for (...) {
    $\codevar{y}[2]$$\,\,$=$\,\,$$\text{\straightphi}$($\codevar{y}[1]$,$\codevar{y}[3]$)
    g(x,$\codevar{y}[2]$)
    $\codevar{y}[3]$$\,\,$=$\,\,$$\codevar{y}[2]$$\,\,$+$\,\,$1
  $\;$}
  $\;$h(x,$\codevar{y}[3]$)
 }
\end{lstlisting}
\end{lrbox}

\newsavebox{\aesdecllisting}
\begin{lrbox}{\aesdecllisting}
\begin{lstlisting}[basicstyle=\footnotesize\ttfamily,frame=l,linewidth=2.5cm,boxpos=t]
 encrypt($\codevar{y}[1]$) {
  $\;$@SPEC_BARR@
  $\;$x$\,\,$=$\,\,$...
  $\;$!$\color{HotPink}\text{f}'$(x,$\color{HotPink}\codevar{y}[1]$)!
  $\;$for (...) {
    $\codevar{y}[2]$$\,\,$=$\,\,$$\text{\straightphi}$($\codevar{y}[1]$,$\codevar{y}[3]$)
    !$\color{HotPink}\text{g}'$(x,$\color{HotPink}\codevar{y}[2]$)!
    $\codevar{y}[3]$$\,\,$=$\,\,$$\codevar{y}[2]$$\,\,$+$\,\,$1
  $\;$}
  $\;$!$\color{HotPink}\text{h}'$(x,$\color{HotPink}\codevar{y}[3]$)!
 }
\end{lstlisting}
\end{lrbox}

\begin{figure}[t]
    \centering
    \subfloat[Original\label{fig:aes-orig}]{\usebox{\aeslisting}}
    \hfill
    \subfloat[CFG\label{fig:aes-cfg}]{\includegraphics[height=3.52cm,valign=T]{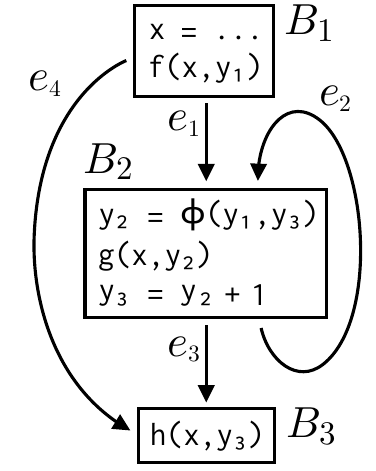}}
    \hfill
\subfloat[\toolname\label{fig:aes-decl}]{\usebox{\aesdecllisting}} \caption{Application of our analysis to
\texttt{AES\_encrypt}. Functions \texttt{f}, \texttt{g}, and \texttt{h} are pseudo transmitters that leak both
their arguments. \pinktt{$\texttt{f}'$}, \pinktt{$\texttt{g}'$} and \pinktt{$\texttt{h}'$} are their \toolname protected
counterparts. NOTE: The figure depicts a \emph{highly simplified} version of the function that still
captures the salient details for our analysis. See Table~\ref{tab:prog-sizes} for the details on the size of the
benchmark.}
    \label{fig:aes-example}
\end{figure}

The functions \texttt{AES128\_encrypt}, \texttt{AES192\_encrypt}, and \\\texttt{AES256\_encrypt} are each given
a plaintext input which can be composed of any number of data blocks.\footnote{Note that in this case,
``blocks'' refers to 16-byte chunks of data, \emph{not} basic blocks.} Each function will call
\texttt{AES\_encrypt} internally for every provided block to encrypt it. \texttt{AES\_encrypt} itself will run a
specified number of rounds of encryption on the provided block. The various steps of a round of AES encryption
are encapsulated in their own functions, which are called by \texttt{AES\_encrypt}.

The analysis' treatment of the function \texttt{AES\_encrypt} is an interesting case study since it can be fully
protected without the need for symbolic execution. Furthermore, it highlights the importance of our
inter-procedural rules. An analogous and \emph{highly simplified} form of the function (that still captures the
salient details for our analysis) and its control-flow graph are depicted in Figures~\ref{fig:aes-orig} and
\ref{fig:aes-cfg}, respectively. (See Table~\ref{tab:prog-sizes} for the details on the size of the benchmark.)
In the figure, \texttt{f}, \texttt{g}, and \texttt{h} represent encapsulations of various operations performed
by AES encryption (e.g. mixing columns or adding in the round key). We point out that the variables \codevar{x}
and \codevar{y}[1] do \emph{not} correspond to secret data in the original code; their contents are the
\emph{addresses} of buffers.

Prior to analyzing \texttt{AES\_encrypt}, \texttt{f}, \texttt{g}, and \texttt{h} will have already been analyzed
and deduced to be pseudo transmitters that leak both of their arguments. From this, the data-flow analysis will
be able to conclude that the frontier for every variable is $\block[1]$. The partial loop expansion mentioned in
Section~\ref{sec:dfa-pass} is crucial to this deduction. Now \texttt{AES\_encrypt} will be protected. There are
no local transmitters to protect. The calls to \texttt{f}, \texttt{g}, and \texttt{h} can be replaced with their
protected counterparts, \pink{\codevarprime{f}}, \pink{\codevarprime{g}}, and \pink{\codevarprime{h}}. %,
Being pseudo transmitters, they need to be caller enforced, and so a \bluett{\specbarr} is placed in
$\block[1]$, the frontier of their collective arguments. The \toolname protected version of the code is shown in
Figure~\ref{fig:aes-decl}. Since \texttt{x} is leaked internally and cannot be deduced from the argument
\codevar{y}[1], \texttt{AES\_encrypt} is \emph{not} a pseudo transmitter and thus cannot be caller enforced.

The crucial difference between the SLH and \toolname-protected versions of \texttt{AES\_encrypt} is that under
SLH, the protection of all the variables is present \emph{inside} the loop. Since every loop iteration contains
a branch that can be speculated, the hardening performed by SLH causes repeated delays. Thus, in the graph we
see that the overhead of SLH increases as the size of the input increases. On the other hand, our protected
version of \texttt{AES\_encrypt} has a single \blue{\specbarr}, and within the loop body, we call
\pink{\codevarprime{g}} and \pink{\codevarprime{h}} which do not have hardened loads. Thus, we pay the
\blue{\specbarr} penalty once, but SLH pays a penalty for every load over and over again. This is why our
protected function performs much better than its SLH counterpart.

We note that the higher-level functions which we benchmark call the protected version of \texttt{AES\_encrypt}
(Figure~\ref{fig:aes-decl}) in a loop. Since \texttt{AES\_encrypt} is callee enforced, the number of times the
\blue{\specbarr} is encountered is linear with respect to the size of the input. That is why we see the overhead
of the protected function increase with the size of the input as opposed to remaining fixed.

\subsection{Djbsort}
\label{sec:djbsort}

This benchmark is an interesting case study due to its heavy use of nested loops. The function takes two
parameters, an array of integers \texttt{x} and the length of the array \texttt{N}. All transmitters in Djbsort
arise from accessing \texttt{x} at various offsets. The core concept of the function is depicted in a
\emph{highly simplified} form (that still captures the salient details for our analysis) in
Figure~\ref{fig:djbsort-loop}. (See Table~\ref{tab:prog-sizes} for the details on the size of the benchmark.)
Djbsort cannot be declassified using our data-flow analysis alone. The issue is by the semantics of
\texttt{while} (and similarly \texttt{for}) loops, when the program is compiled, the control-flow graph will
typically include an edge that bypasses the loop body ($\pinkm{\edge[3]}$ from the figure).\footnote{The
compiler can indeed remove this edge if it can be deduced that the loop body will execute at least once.
However, it is unlikely to happen if we have nested loops with complicated conditions, which is what we see in
Djbsort.} Ideally, the frontier of \texttt{x} is $\block[2]$ since \circled{1} \bluett{\transmit[x+i]} in
$\block[2]$ is guaranteed to be encountered once the first branch is crossed, and \circled{2} \texttt{i} will be
known at every point due to being an inductive variable with a known starting value. However, because the
data-flow analysis assumes \pinkm{$\edge[3]$} is traversable, the frontier can be lifted no higher than
$\block[3]$. This problem is addressed by symbolic execution as described in Section~\ref{sec:symex}. It can
deduce that there is no possible execution of Djbsort in which the array \texttt{x} is not accessed at least
once. After the symbolic execution pass, our analysis will correctly report that the frontier of \texttt{x}
should be $\block[2]$.

The conditional \texttt{if$\,$(\greentt{N$\,$<$\,$2})} at the top of the function is crucial for the symbolic
execution tool to make the aforementioned deduction. Once the symbolic execution engine proceeds past this
branch, it is armed with the path constraint $N \geq 2$. This constraint is necessary to deduce that the loop
condition is always satisfied initially, and therefore that \texttt{x} is guaranteed to be leaked.

We note that although symbolic execution is needed to effectively relax Djbsort, it is \emph{not} be able to do
it on its own. Crucially, it is not \texttt{x} that is transmitted, but \texttt{x$\,$+$\,$i}. The data-flow
analysis is needed to deduce that \texttt{i} is always known, and it will do so via partial loop expansion. Only
then can the instrumentation for the symbolic execution pass treat \transmit[x$\,$+$\,$i] as though it leaks
\texttt{x}.\footnote{\transmit[x$\,$+$\,$i] causes \texttt{x$\,$+$\,$i} to be known. Since \texttt{i} is known,
we can exploit backward solvability to deduce that \texttt{x} is known.}

Since \texttt{x} is only guaranteed to be known after it's non-speculatively confirmed that $N\,\geq\,2$, a
\specbarr must be placed after the \texttt{if$\,$($\greentt{N$\,$<$\,$2}$)}. After this point in the program,
SLH can be disabled for the whole function.

\begin{figure}[t]
\centering
\hspace{0.7cm}
\begin{minipage}[H]{3cm}
    \begin{lstlisting}[basicstyle=\footnotesize\ttfamily,frame=r,linewidth=3cm,boxpos=t]
$\;$
$\;$
   if$\,$(^N$\;\greentt{<}\;$2^)$\;${
     return;
   }
   i$\;$=$\;$0
   while$\,$(!i$\;$<$\;$N!)$\;${
     @$\color{HotBlue}{\dots}$$\;\,$=$\;$x[i]@
     i$\;$=$\;$i$\,$+$\,$1
   }
 $\;$
 $\;$
    \end{lstlisting}
\end{minipage}%
\hspace{0.6cm}
\begin{minipage}[H]{3.9cm}
    \includegraphics[height=4cm]{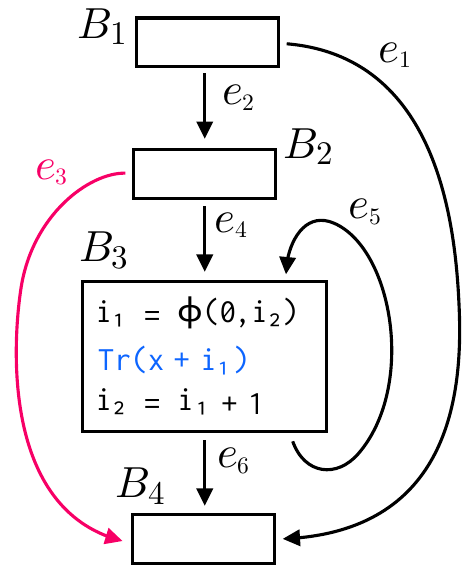}
\end{minipage}
\caption{The core of the Djbsort benchmark from the perspective of our analysis. NOTE: The figure
depicts a \emph{highly simplified} version of the function that still captures the salient details for our
analysis. See Table~\ref{tab:prog-sizes} for the details on the size of the benchmark.}
\label{fig:djbsort-loop}
\end{figure}

\subsection{ChaCha20}
\label{sec:chacha20}

The ChaCha20 benchmark is similar to the Djbsort benchmark in that it involves a series of nested loops and that
all transmitters are due to accesses of various arrays. One can intuit from the source code that if the length
of the provided input is non-zero, then all arrays' addresses are guaranteed to be known. \klee must be used to
prove this in an attempt to safely remove protections for the entirety of the function. One key difference
between ChaCha20 and Djbsort is that in the high-level source code, there is no \texttt{if} statement that
short-circuits the function if a zero-length input is provided. However, compilers will often add such checks in
the form of a loop ``preheader'', and indeed this is what LLVM does for ChaCha20. Thus, from the point of view
of our analysis (which is applied to the LLVM IR generated after compilation), the two benchmarks are quite
similar in form. We thus omit any discussion of the analysis itself. A single \specbarr is placed after the
check for non-zero length but outside any loops, and SLH is disabled for the entire function.

\section{Related Work}

Prior work Blade~\cite{blade} and several recent SLH variants~\cite{USLH, SSLH, SpectreDeclassified} share
\toolname's goal of reducing overhead of speculative execution defenses for constant-time code. Blade protects
only speculatively-accessed data: it statically constructs a data-flow graph from mis-speculated loads (which
can return secrets) to transmitters and infers a minimal placement of protections that cuts off such data-flow.
SSLH~\cite{SSLH} and USLH~\cite{USLH} strengthen SLH to protect non-speculatively accessed data. As discussed in
Section~\ref{sec:eval}, these schemes will incur a higher performance penalty than \toolname while providing
comparable security.

selSLH builds on programming language support for public/secret type information~\cite{FactLanguage}, which it
uses to selectively apply SLH only on loads into public variables. This approach relies on typed variables, with
the type system enforcing that a transmitter's operand is always typed public. However, this property isn't
preserved on compilation: a machine register \texttt{rX} can hold public and secret values in different program
contexts. Therefore, mis-speculation from a context in which \texttt{rX} holds a secret to a context in which
\texttt{rX} is public and is an operand to a transmitter can result in \texttt{rX} leaking~\cite[Section
4.2.2]{deianpldi}. Overall, selSLH, like Blade, doesn't protect secret non-speculative register data. In
contrast, \toolname protects both speculatively-accessed data (read from memory under speculation) and secret
non-speculative register data, and makes no assumptions on the programming language used.

In Shivakumar et al.~\cite{typing-crypto}, the authors propose language primitives for writing cryptographic
code that is protected from Spectre v1. These primitives can be used to implement traditional SLH along with
selSLH, and a type system checks that the program meets a provided security definition. Inserting these
primitives in code is not automatic and requires developer effort. In theory, their work is compatible with ours
as we could use our analysis to relax code protected via their primitives. It could be that more fine-grained
(and thus more beneficial) relaxations would be possible.

Finally, several works~\cite{InSpectre,deianpldi,Spectector,CSFdef} develop static analyses to detect violations
of a formal notion of security against speculative leakage, which is based on the idea that a program's
speculative execution should not leak more than its non-speculative execution. \toolname also leverages this
type of security property, but its goal is to safely relax conservative protections while maintaining security.

\section{Conclusion}

This paper presented \toolname: a static analysis that reduces the amount of protection needed to ``take
speculative execution off the table'' for constant-time programs. The key observation is that as the program's
non-speculative execution makes forward progress, various instructions that reveal their operands over side
channels will \emph{inevitably} execute. Such \emph{inevitably-revealed} operands need not be protected in the
program's speculative execution. This allows one to safely hoist, consolidate or even remove protection
primitives, improving performance.

Longer term, an interesting question will be whether hardware-based schemes such as SPT and software-based
schemes such as \toolname can be combined to further reduce overhead. These two approaches are complementary, in
the sense that a hardware-based scheme can take advantage of fine-grain dynamic information (e.g., the current
path taken by the program) while a software-based scheme can take advantage of global knowledge of the control
data-flow graph. There are also opportunities to improve \toolname as a stand-alone analysis. For example, to
automatically deduce protection placement, understand which program paths are realizable, and improve the
fidelity in which the analysis understands knowledge (e.g., using ideas from QIF~\cite{qif}).

\paragraph{Acknowledgments}
We thank the anonymous reviewers and our (\emph{truly superb}) shepherd for their valuable feedback. This
research was partially funded by NSF grants 1954521, 1942888 and 2154183.

\bibliographystyle{plain}
\balance
\bibliography{refs,chris,architecture,security}

\newpage
\appendix

\begin{table*}[t]
{\small
\begin{tabular}{l|l|l|r|r|r}
\toprule
\textbf{Benchmark} & \textbf{Function} & \textbf{Callees} & \multicolumn{1}{l|}{\textbf{Num Blocks}} &
\multicolumn{1}{l|}{\textbf{Num Edges}} & \multicolumn{1}{l}{\textbf{Num Instrs}} \\
\toprule
\multirow{6}{*}{AES} & AES128\_encrypt & AES\_encrypt & 4 & 7 & 14 \\
& AES\_encrypt & \begin{tabular}[c]{@{}l@{}}LoadBytes, SubBytes,\\ ShiftRows, SaveBytes\end{tabular} & 3 & 6 & 200 \\
& LoadBytes & --- & 3 & 5 & 258 \\
& SubBytes & --- & 7 & 10 & 260 \\
& ShiftRows & --- & 1 & 2 & 177 \\
& SaveBytes & --- & 3 & 5 & 238 \\
\midrule
Djbsort & int32\_sort & --- & 37 & 68 & 205 \\
\midrule
ChaCha20 & br\_chacha20\_ct & --- & 12 & 21 & 409 \\
\bottomrule
\end{tabular}}
\vspace{20pt}
\caption{\label{tab:prog-sizes}The size and call structure of each benchmark. In our analysis, we ensure that
every block has an input and output edge. For singleton-block functions, we thus add two ``dummy edges'',
$(\entrynode, \block)$ and $(\block, \exitnode)$. Notice, Djbsort and ChaCha20 have more complicated
control-flow structure than the AES functions, but the AES functions are nested 3 levels deep, requiring our
inter-function protection rules.}\vspace{-10pt}
\end{table*}

\section{Proofs of Theorems}
\label{app:proofs}

\setcounter{theorem}{0}

\begin{theorem}\label{app:thm:KE_transmitter}Consider an instruction $I$ of the form $\transmit[x]$ in some block
$B$ with output edges $\edge[o]$. For all $\edge[o]$, $x \in \edgeknowns[o]$. \end{theorem}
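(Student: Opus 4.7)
The plan is to proceed directly from Definitions~\ref{defn:known} and~\ref{defn:KE}. By Definition~\ref{defn:KE}, to establish $x \in \edgeknowns[o]$ for a given output edge $\edge[o]$ of $B$, it suffices to show that for every realizable trace $\trace \in \alltraces$ with $\edge[o] \in \trace$, the variable $x$ is known by the time the associated execution traverses $\edge[o]$.

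First, I would fix an arbitrary output edge $\edge[o] \in \outedgeset[B]$ and an arbitrary realizable trace $\trace$ with $\edge[o] \in \trace$. By the definition of ``traverses'' given in Section~\ref{sec:exec-traces}, whenever the execution associated with $\trace$ traverses $\edge[o]$ at some time $k$, the $k$-th instruction is the last instruction of $B$. Since the instructions of $B$ execute sequentially and $\inst$ (of the form $\transmit[x]$) lies in $B$, the execution has already issued $\inst$ non-speculatively at some time $k' \leq k$ prior to (or coinciding with) the traversal of $\edge[o]$.

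Next, I would appeal to Definition~\ref{defn:known}: a variable is known from the moment it is passed to a non-speculative transmitter onward. Because $\inst$ passes $x$ to a non-speculative transmitter, $x$ is known at every time $\geq k'$, and in particular at time $k$ when $\edge[o]$ is traversed. Since $\trace$ and $\edge[o]$ were arbitrary, the conditions of Definition~\ref{defn:KE} are satisfied, giving $x \in \edgeknowns[o]$ for every $\edge[o]$.

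I do not expect any real obstacle here; the statement is essentially unpacking the definitions. The only mild subtlety worth flagging is the vacuous case in which $B$ lies on no realizable trace (e.g., $B$ is unreachable in the control-flow graph): then no $\trace$ contains any $\edge[o] \in \outedgeset[B]$, and the universally quantified condition in Definition~\ref{defn:KE} is satisfied vacuously, so $x \in \edgeknowns[o]$ still holds (consistent with the notion of vacuous knowledge discussed in Section~\ref{sec:non-spec-knowledge}).
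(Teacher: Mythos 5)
Your proposal is correct and follows essentially the same route as the paper's proof: both argue that by the time any realizable trace traverses an output edge of $B$, the transmitter $\transmit[x]$ has already executed non-speculatively, so $x$ is known by Definition~\ref{defn:known} and hence $x \in \edgeknowns[o]$ by Definition~\ref{defn:KE}. Your extra remarks about the time indices and the vacuous case of an unreachable $B$ are consistent with the paper's treatment of vacuous knowledge and do not change the argument.
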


\begin{proof}
Take any $\edge[o] \in \outedgeset$. Suppose we have a trace $\trace \in \alltraces$ such that $\edge[o] \in
\trace$. By the time $\trace$ traverses $\edge[o]$, it must have already executed $\transmit[x]$. By
Definition~\ref{defn:known}, $x$ is known. So by Definition~\ref{defn:KE}, we must have $x \in \edgeknowns[o]$.
\end{proof}

\begin{theorem}\label{app:thm:KE_fwd_intra}Take any edge $\edge \in \alledges$. Consider a forward solvable
instruction $\inst$ from anywhere in the control-flow graph, and suppose it is of the form $y=f(x_1, \dots,
x_N)$. If $x_i \in \edgeknowns$ for all $x_i$, then $y \in \edgeknowns$.
\end{theorem}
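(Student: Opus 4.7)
The plan is to unfold Definition~\ref{defn:KE} on the hypothesis and discharge it on the conclusion by exploiting forward solvability of $\inst$ together with the inferability clause of Definition~\ref{defn:known}. First, fix an arbitrary realizable trace $\trace \in \alltraces$ with $\edge \in \trace$, and let $k$ be the time at which $\trace$ traverses $\edge$. Our goal is to show that $y$ is known every time $\trace$ crosses $\edge$, so it suffices to argue about this arbitrary $k$.

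Next, I would apply the hypothesis $x_i \in \edgeknowns$ for each $i \in \{1,\dots,N\}$: by Definition~\ref{defn:KE}, each $x_i$ is either already known at time $k$ or is guaranteed to become known at some later time in $\trace$. In either case, every $x_i$ is (eventually) a known variable relative to this trace. Now I would invoke forward solvability of $\inst$: since $\inst$ has the form $y=f(x_1,\dots,x_N)$ and is forward solvable, the concrete assignment to the $x_i$'s determines a unique value for $y$, and $f$ is computed by a (polynomial-time) instruction of the LLVM semantics. By the inferability clause of Definition~\ref{defn:known}, $y$ is therefore known whenever all the $x_i$ are known. Hence $y$ is (eventually) known along $\trace$; since $\trace$ was an arbitrary realizable trace through $\edge$, Definition~\ref{defn:KE} gives $y \in \edgeknowns$.

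A subtle point worth addressing explicitly is the status of $y$ on traces where the instruction $\inst$ is itself never executed: because the program is in SSA form, the equation $y = f(x_1,\dots,x_N)$ is the unique defining equation for $y$ and is exploitable ``globally'' (as stressed in Section~\ref{sec:instructions}), so the inference goes through regardless of whether $\inst$ lies on $\trace$. In the case where $\inst$ is not traversed by $\trace$, $y$ is itself undefined on $\trace$ and the resulting knowledge is merely vacuous, which is still consistent with $y \in \edgeknowns$.

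The main obstacle I anticipate is not any deep technical step but rather making the distinction between ``already known'' and ``guaranteed to become known'' precise enough that the forward-solvable inference commutes with the temporal quantifier in Definition~\ref{defn:KE}. Concretely, one must check that the $x_i$'s becoming known at (possibly different) later times still permits a single time at which $y$ becomes inferable — namely, the latest such time — so that $y$ indeed satisfies Definition~\ref{defn:known} along the trace. This bookkeeping is straightforward given that Definition~\ref{defn:known} is closed under polynomial-time inference from any finite set of known variables.
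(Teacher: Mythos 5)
Your proposal is correct and follows essentially the same route as the paper's proof: unfold Definition~\ref{defn:KE} on the hypothesis, use forward solvability plus the inferability clause of Definition~\ref{defn:known} (and the fact that $f$ is public) to conclude $y$ is known, and then apply Definition~\ref{defn:KE} again for the conclusion. If anything, you are more careful than the paper, which silently collapses ``already known or guaranteed to become known'' into ``known at that point in time,'' whereas you handle the temporal bookkeeping and the vacuous-knowledge/SSA-globality point explicitly.
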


\begin{proof}
Suppose we have a forward solvable instruction of the form $y=f(x_1, \dots, x_N)$, and suppose further that on
some arbitrary edge $\edge$ we have $x_1,\dots,x_N \in \edgeknowns$. By Definition~\ref{defn:KE}, if an
execution traverses $\edge$, then $x_1,\dots,x_N$ are known at that point in time. The attacker can use their
knowledge of these variables along with knowledge of $f$ (due to the program being public) to compute $y$. Thus,
at that point in time, the attacker knows $y$. By Definition~\ref{defn:KE}, $y \in \edgeknowns$.
\end{proof}

\begin{theorem}\label{app:thm:KE_bwd_intra}Take any edge $\edge \in \alledges$. Consider a backward solvable
instruction $\inst$ from anywhere in the control-flow graph, and suppose it is of the form $y=f(x_1, \dots,
x_N)$. For any $j \in \set{1,\dots,N}$, suppose we have all $x_{i\neq j} \in \edgeknowns$ as well as $y \in
\edgeknowns$. Then $x_j \in \edgeknowns$.
\end{theorem}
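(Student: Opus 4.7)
The plan is to mirror the proof strategy of Theorem~\ref{app:thm:KE_fwd_intra} (the forward version) but invert the direction of inference using the backward solvability hypothesis. The overall structure will follow three steps: (i) fix an arbitrary realizable trace $\trace$ traversing $\edge$; (ii) use Definition~\ref{defn:KE} applied to the hypotheses $y \in \edgeknowns$ and $x_{i \neq j} \in \edgeknowns$ to conclude that the attacker knows the values of $y$ and of every $x_{i\neq j}$ at the point when $\trace$ traverses $\edge$; (iii) invoke the backward solvability of $\inst$, together with the attacker's knowledge of $f$ (since the program is public), to uniquely recover $x_j$. Since $\trace$ was arbitrary, Definition~\ref{defn:KE} then gives $x_j \in \edgeknowns$.

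The main subtlety, which I would address explicitly, is that the statement allows $\inst$ to be located \emph{anywhere} in the control-flow graph, possibly on a block that is not traversed by $\trace$. This is precisely the ``equation is globally valid'' point made in Section~\ref{sec:instructions}: because the program is in SSA form, the equation $y = f(x_1, \dots, x_N)$ uniquely defines the relationship among its variables throughout the CFG, and hence may be applied even when $\inst$ itself does not appear on $\trace$. I would remark that in the case where $x_j$ is not defined along $\trace$, the deduced knowledge is vacuous in the sense of Section~\ref{sec:non-spec-knowledge}, which is harmless: Definition~\ref{defn:KE} is satisfied either way, and vacuous knowledge is permitted by our model.

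The hardest part will not be the logical chain itself (which is essentially one line given the definitions) but justifying the cross-CFG use of the instruction's equation cleanly. I would write the proof so that this appeal to SSA-based global validity is made once and explicitly, so that the reader sees that backward solvability of $\inst$ is being used as a mathematical identity on the variables' values rather than as a claim about the dynamic execution of $\inst$. With that in place, the remainder reduces to a direct application of Definition~\ref{defn:known} (the attacker can ``infer'' $x_j$ in polynomial time from the known quantities via the unique inverse guaranteed by backward solvability) and Definition~\ref{defn:KE}.
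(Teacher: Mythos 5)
Your proposal is correct and matches the paper's proof essentially step for step: fix the edge, apply Definition~\ref{defn:KE} to conclude the attacker knows $y$ and the $x_{i\neq j}$ upon traversal, then use backward solvability plus the publicness of $f$ to compute $x_j$, and conclude via Definition~\ref{defn:KE}. Your added remarks on the SSA-based global validity of the equation and on vacuous knowledge are consistent with the paper's discussion in Section~\ref{sec:instructions}, though the paper's own proof does not spell them out.
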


\begin{proof}
Suppose we have a backward solvable instruction of the form $y=f(x_1, \dots, x_N)$. Suppose further, without
loss of generality, that on some arbitrary edge $\edge$ we have $y, x_1,\dots,x_{N-1} \in \edgeknowns$. By
Definition~\ref{defn:KE}, if an execution traverses $\edge$, then $y, x_1,\dots,x_{N-1}$ are known at that
point in time. The attacker can use their knowledge of these variables along with knowledge of $f$ (due to the
program being public) to compute $x_N$. Thus, at that point in time, the attacker knows $x_N$. By
Definition~\ref{defn:KE}, $x_N \in \edgeknowns$.
\end{proof}

\begin{theorem}\label{app:thm:KE_fwd_inter} Consider a block $\block$. If for some variable $v$ we have $v \in
\bigcap \edgeknowns[i]$ for all realizable $\edge[i]$ in $\inedgeset$, then we have $v \in \edgeknowns[o]$ for
every $\edge[o] \in \outedgeset$.
\end{theorem}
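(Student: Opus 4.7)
The plan is to unfold Definition~\ref{defn:KE} applied to an arbitrary output edge and trace, and then reduce to the hypothesis about the input edges, using the fact that once a variable is known along a trace it stays known.

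First I would fix an arbitrary output edge $\edge[o] \in \outedgeset$ and an arbitrary realizable trace $\trace \in \alltraces$ with $\edge[o] \in \trace$; by Definition~\ref{defn:KE}, showing $v \in \edgeknowns[o]$ reduces to showing $v$ is known at the moment $\trace$ traverses $\edge[o]$. Since $\trace$ is realizable and reaches $\block$ in order to leave via $\edge[o]$, it must have entered $\block$ along some input edge $\edge[i] \in \inedgeset$. That $\edge[i]$ is itself realizable (it is contained in the realizable trace $\trace$), so $\edge[i]$ is among the edges quantified over in the hypothesis, giving $v \in \edgeknowns[i]$.

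Next I would apply Definition~\ref{defn:KE} in the forward direction to $\edge[i]$: since $\trace$ traverses $\edge[i]$ and $v \in \edgeknowns[i]$, $v$ is known by the time $\trace$ traverses $\edge[i]$. Finally I would invoke monotonicity of knowledge along a trace, i.e.\ the fact that Definition~\ref{defn:known} is preserved as time advances (a variable that has been passed to a non-speculative transmitter, or that is inferable from other known variables, remains so at all later times). Since $\edge[i]$ is traversed before $\edge[o]$ in $\trace$, $v$ is still known when $\edge[o]$ is traversed, which by Definition~\ref{defn:KE} gives $v \in \edgeknowns[o]$. Because $\edge[o]$ and $\trace$ were arbitrary, the conclusion follows.

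The main obstacle is the monotonicity step: strictly speaking the paper never states ``knowledge only grows along a trace'' as a standalone lemma, so I would either justify it in one line from Definition~\ref{defn:known} (a past transmission remains a past transmission, and the set of known variables from which $v$ is inferable only grows) or silently fold it into the argument as in the proof of Theorem~\ref{app:thm:KE_transmitter}. A minor secondary subtlety is the degenerate case $\inedgeset = \emptyset$ (i.e.\ $\block = \entrynode$): the vacuous quantification would make the hypothesis trivially true, but then no realizable trace traverses any $\edge[o]$ without first being at entry, so the intended reading is that the theorem is applied to non-entry blocks, and I would note this in passing rather than belabor it.
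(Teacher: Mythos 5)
Your proposal is correct and follows essentially the same route as the paper's proof: fix an output edge and a realizable trace through it, observe the trace must have entered $\block$ via some realizable input edge covered by the hypothesis, apply Definition~\ref{defn:KE} there, and carry the knowledge forward to the output edge (the paper, like you, treats the persistence of knowledge along a trace as immediate from Definition~\ref{defn:known} rather than as a separate lemma). Your remarks on monotonicity and the vacuous $\inedgeset=\emptyset$ case are sensible refinements but do not change the argument.
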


\begin{proof}
Let $V = \bigcap \edgeknowns[i]$ for all realizable $\edge[i]$ in $\inedgeset$. Take any $\edge[o] \in
\outedgeset$. Now take any $\trace \in \alltraces$ such that $\edge[o] \in \trace$. There must be \emph{some}
realizable $\edge' \in \inedgeset$ such that $\edge' \in \trace$. By construction, for any $v \in V$, we must
have $v \in \edgeknowns[\edge'][]$. Note that this is true regardless of what $\edge'$ is. By
Definition~\ref{defn:KE}, when $\trace$ traverses $\edge'$, $v$ is known. By choice of $\trace$, it then
traverses $\edge[o]$. At the time $\edge[o]$ is traversed, $v$ is known. Thus by Definition~\ref{defn:KE}, $v
\in \edgeknowns[o]$.
\end{proof}

\begin{theorem}\label{app:thm:KE_bwd_inter} Consider a block $\block$. If for some variable $v$ we have $v \in
\bigcap \edgeknowns[o]$ for all realizable $\edge[o]$ in $\outedgeset$, and if $v$ is not defined in $\block$,
then we have $v \in \edgeknowns[i]$ for every realizable $\edge[i] \in \inedgeset$.
\end{theorem}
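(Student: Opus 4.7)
The plan is to mirror the structure of the proof of Theorem~\ref{app:thm:KE_fwd_inter}, but running the argument in the backward direction along $\block$. Fix an arbitrary realizable input edge $\edge[i] \in \inedgeset$ and an arbitrary trace $\trace \in \alltraces$ with $\edge[i] \in \trace$. The goal is to apply Definition~\ref{defn:KE} to conclude $v \in \edgeknowns[i]$, i.e., that $v$ is already known when $\trace$ traverses $\edge[i]$, or is guaranteed to become known thereafter.

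The key step is that after $\trace$ traverses $\edge[i]$, it enters $\block$ and executes its instructions in order. Assuming execution continues past $\block$, $\trace$ must then traverse some realizable output edge $\edge[o] \in \outedgeset$, since the only way to leave $\block$ is via one of its outgoing edges. By hypothesis $v \in \bigcap \edgeknowns[o]$ over all realizable $\edge[o]$, so in particular $v \in \edgeknowns[o]$ for whichever output edge $\trace$ takes. Applying Definition~\ref{defn:KE} to this $\edge[o]$, either $v$ is known at the time of its traversal in $\trace$, or it is guaranteed to become known after. Either way, $v$ becomes known at some point in $\trace$ at or after $\edge[i]$, so Definition~\ref{defn:KE} applied to $\edge[i]$ yields $v \in \edgeknowns[i]$. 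Since $\edge[i]$ and $\trace$ were arbitrary, we are done.

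The role of the hypothesis that $v$ is \emph{not} defined in $\block$ is worth emphasizing explicitly in the write-up. Under SSA, if $v$ were defined in $\block$, then $v$ is simply not a well-formed name on any $\edge[i] \in \inedgeset$: asserting $v \in \edgeknowns[i]$ would be meaningless (or at best vacuous in the sense of Section~\ref{sec:non-spec-knowledge}). The hypothesis rules this out and ensures that the same SSA name $v$ is meaningful on both sides of $\block$, which is what allows the ``knowledge that $v$ eventually becomes known'' on the output side to be reinterpreted as knowledge on the input side.

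The main obstacle I anticipate is the corner case in which $\trace$ does not actually leave $\block$, e.g., because the program terminates inside $\block$ or gets stuck in a non-terminating computation that never reaches an outgoing edge. In that case the argument above does not directly produce an $\edge[o]$ to appeal to. This can be handled by observing that for the purposes of Definition~\ref{defn:KE} we only need $v$ to become known along the portion of the trace that actually occurs, and if the trace remains entirely inside $\block$ then the knowledge conclusion on input edges with respect to this trace is either supplied directly by a transmitter executed inside $\block$ or is vacuous in the same inactionable sense already discussed for undefined-variable paths. Making this boundary case precise, without weakening the statement or appealing to semantics beyond those set up in Section~\ref{sec:exec-traces}, is the part of the proof that will require the most care.
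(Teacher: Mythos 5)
Your proof has the same skeleton as the paper's: fix a realizable $\edge[i] \in \inedgeset$ and a trace $\trace \in \alltraces$ containing it, observe that $\trace$ must also contain some realizable $\edge[o] \in \outedgeset$ on which $v$ is known by hypothesis, and transfer that knowledge back to $\edge[i]$ via Definition~\ref{defn:KE}. (Your worry about traces that never leave $\block$ is not a divergence; the paper's proof simply asserts that some realizable outgoing edge occurs in the trace.)

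The genuine gap is in how you handle the hypothesis that $v$ is not defined in $\block$. Your deduction never uses it: the step ``$v$ becomes known at some point at or after $\edge[i]$, hence $v \in \edgeknowns[i]$'' would apply verbatim if $v$ were defined inside $\block$, so as written you have proved the statement with its key hypothesis deleted. That stronger claim is not what the paper intends. For instance, in Figure~\ref{fig:program-B} any execution entering $\block[1]$ inevitably reaches the transmitter of $b_3$ and thereby reveals $b_1, b_2, b_3$, yet the paper reports $\edgeknowns[1] = \emptyset$ on the edge entering $\block[1]$: the point is that ``knowledge'' concerns the value currently bound to the name, and a definition lying ahead (a fresh, possibly nondeterministic binding, or a loop re-binding as in Figure~\ref{app:fig:pita-loop-1}) breaks the link between what is revealed after $\edge[o]$ and what $v$ denotes when $\edge[i]$ is traversed. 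This is precisely where the paper's proof invokes the hypothesis: since nothing in $\block$ defines $v$, nothing in $\block$ can change $v$'s knowledge status, so the guarantee attached to the outgoing edge already holds at the traversal of $\edge[i]$. Relatedly, your side justification of the hypothesis as an SSA well-formedness condition (``$v$ is not a well-formed name on $\edge[i]$'') contradicts the framework: the clarification after Definition~\ref{defn:KE} and the remark following Theorem~\ref{thm:KE_bwd_inter} explicitly allow a variable to be known on edges not dominated by its definition, and indeed $b_3$ is deduced known on $\edge[2]$, $\edge[3]$, $\edge[4]$ in Figure~\ref{fig:program-B}, all of which precede its defining \phifunction. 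The fix is to move the non-definition hypothesis into the transfer step itself, rather than explaining it away as a naming issue.
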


\begin{proof}
Let $V = \bigcap \edgeknowns[o]$ for all realizable $\edge[o]$ in $\outedgeset$. Take any $\edge[i] \in
\inedgeset$ and any $\trace \in \alltraces$ such that $\edge[i] \in \trace$. There must be \emph{some}
realizable $\edge'' \in \outedgeset$ such that $\edge'' \in \trace$. By construction, for any $v \in V$, we must
have $v \in \edgeknowns[\edge''][]$. By Definition~\ref{defn:KE}, when $\trace$ traverses $\edge''$, $v$ is
known. By choice of $\trace$, $\edge[i]$ is guaranteed to be traversed. If $v$ is not defined in $\block$, then
$v$ is guaranteed to be known. Traversing $\edge[i]$ guarantees $v$ is known, and so by
Definition~\ref{defn:KE}, $v \in \edgeknowns[i]$.
\end{proof}

\begin{theorem}\label{app:thm:KE_phi_fwd} Let $\inst[\phi]$ denote a \phifunction of the form $y =\allowbreak
\phi(x_1,\allowbreak\dots,\allowbreak x_N)$ in block $\block$ with $N$ input edges $\edge[1],\dots,\edge[N]$.
The semantics of $\inst[\phi]$ are such that $y = x_i$ if $\edge[i] \in \inedgeset$ is traversed to reach
$\block$. If for all realizable $\edge[i]$ we have $x_i \in \edgeknowns[i]$, then for every output edge
$\edge'$, $y \in \edgeknowns[\edge'][]$.
\end{theorem}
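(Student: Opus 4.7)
The plan is to unfold Definition~\ref{defn:KE} directly on an arbitrary realizable trace through an arbitrary output edge, and then invoke the hypothesis together with the standard public-PC assumption from Section~\ref{sec:instructions}. Concretely, fix an output edge $\edge'$ of $\block$ and pick any trace $\trace \in \alltraces$ with $\edge' \in \trace$. Since $\edge'$ leaves $\block$, the trace must have previously entered $\block$, so there is some input edge $\edge[i] \in \inedgeset$ with $\edge[i] \in \trace$. This $\edge[i]$ is realizable (by virtue of lying on the realizable trace $\trace$), so the hypothesis applies and gives $x_i \in \edgeknowns[i]$.

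From there, the key step is to argue that knowledge of $x_i$ at $\edge[i]$, plus knowledge of which $i$ was taken, yields knowledge of $y$ at $\edge'$. By Definition~\ref{defn:KE} applied to $\edge[i]$, whenever $\trace$ traverses $\edge[i]$ the variable $x_i$ is known. Because the PC is public (Section~\ref{sec:instructions}), the attacker can identify which input edge was traversed, and hence which index $i$ of the \phifunction was selected. The semantics of $\inst[\phi]$ give $y = x_i$ once $\inst[\phi]$ executes inside $\block$, which happens strictly before $\trace$ leaves along $\edge'$. Composing these facts, by the time $\trace$ traverses $\edge'$ the attacker knows both $i$ and $x_i$, hence $y$ is known.

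Since $\trace$ was an arbitrary realizable trace containing $\edge'$, Definition~\ref{defn:KE} then yields $y \in \edgeknowns[\edge'][]$. Since $\edge'$ was an arbitrary output edge, this establishes the claim. The main subtlety, and the only point that requires care, is the status of unrealizable input edges: the hypothesis only quantifies over realizable $\edge[i]$, but this is harmless precisely because any $\trace$ we consider must enter $\block$ along some realizable input edge, so the unrealizable edges never need to contribute. The rest is bookkeeping on top of Definition~\ref{defn:KE}, mirroring the proof of Theorem~\ref{app:thm:KE_fwd_intra}.
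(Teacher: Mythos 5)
Your proposal is correct and follows essentially the same route as the paper's proof: fix an arbitrary realizable trace through an output edge, locate the input edge it used (noting it is realizable), apply the hypothesis, and combine the \phifunction semantics with the public-PC assumption to conclude $y$ is known when $\edge'$ is traversed, then invoke Definition~\ref{defn:KE}. Your explicit remark that unrealizable input edges never arise on a realizable trace is exactly the point the paper handles implicitly with its ``without loss of generality'' choice of input edge.
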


\begin{proof}
Suppose we have a trace $\trace \in \alltraces$ such that for some $\edge' \in \outedgeset$ we have $\edge' \in
\trace$. There must be \emph{some} edge from $\inedgeset$ that is also contained in $\trace$. Without loss of
generality, we'll call this edge $\edge[j]$ and suppose $\edge[j] \in \trace$. By assumption, $x_j \in
\edgeknowns[j]$. By the semantics of $\inst[\phi]$, when traversing $\edge'$, we have $y = x_j$. By
Theorem~\ref{thm:KE_fwd_intra}, $y \in \edgeknowns[j]$ as well. Since we assumed $\edge[j]$ could be any input
edge, $y$ is guaranteed to be known so long as $\edge' \in \trace$. Thus, by Definition~\ref{defn:KE}, $y \in
\edgeknowns[\edge'][]$.
\end{proof}

\begin{theorem}\label{app:thm:KE_phi_bwd} Let $\inst[\phi]$ denote a \phifunction of the form $y =\allowbreak
\phi(x_1,\allowbreak\dots,\allowbreak x_N)$ in a block $\block$ with input edges $\edge[1],\dots,\edge[N]$. The
semantics of $\inst[\phi]$ are the same as in Theorem~\ref{app:thm:KE_phi_fwd}. If for all realizable output
edges $\edge'$ we have $y \in \edgeknowns[\edge'][]$, then for every $\edge[i]$ we have $x_i \in
\edgeknowns[i]$.
\end{theorem}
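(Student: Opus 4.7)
The plan is to fix an arbitrary realizable input edge $\edge[i]$ and an arbitrary trace $\trace \in \alltraces$ with $\edge[i] \in \trace$, and show that $x_i$ is guaranteed to be known during this trace at or after the traversal of $\edge[i]$; by Definition~\ref{defn:KE}, this suffices to conclude $x_i \in \edgeknowns[i]$.

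First, I would use the semantics of $\inst[\phi]$ to observe that once $\trace$ traverses $\edge[i]$, it enters $\block$ and $\inst[\phi]$ is evaluated, binding $y = x_i$ (since $\edge[i]$ was the incoming edge). Second, because $\trace$ is realizable and it has reached $\block$, it must leave $\block$ via some realizable output edge $\edge' \in \outedgeset$. By hypothesis, $y \in \edgeknowns[\edge'][]$, so Definition~\ref{defn:KE} tells us $y$ is guaranteed to be (or become) known by the time $\trace$ traverses $\edge'$. Third, because the program is in SSA form, the assignment $y = x_i$ made at $\inst[\phi]$ is the unique definition of $y$ on this trace and is not overwritten; knowledge of $y$ is therefore equivalent to knowledge of $x_i$ at any point after $\inst[\phi]$ executes (formally, this is a single application of Theorem~\ref{app:thm:KE_bwd_intra} to the trivially backward-solvable equation $y = x_i$ holding along $\trace$). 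So $x_i$ is known at the same point $y$ is, which is after $\trace$ traverses $\edge[i]$. Since $\trace$ was an arbitrary realizable trace through $\edge[i]$, Definition~\ref{defn:KE} yields $x_i \in \edgeknowns[i]$.

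The main obstacle I expect is the corner case in which the trace enters $\block$ but never leaves it (e.g.\ a non-terminating computation inside $\block$, or $\block$ being the program's $\exitnode$-style terminator with no realizable outgoing edge). In such cases the hypothesis ``$y \in \edgeknowns[\edge'][]$ for all realizable $\edge'$'' is vacuous and the argument above cannot promote knowledge of $y$ through an output edge. The cleanest resolution is to observe that Definition~\ref{defn:KE} is phrased in terms of the trace ``becoming known every time $\edge[i]$ is traversed,'' which is naturally interpreted over the progression of the trace; for traces that do produce some realizable $\edge'$ the argument goes through verbatim, and for traces that never leave $\block$ one can argue (symmetrically to the hypothesis being vacuous) that there is no observation to protect against. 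Making this interpretation explicit, and confirming it is the same reading used in the symmetric forward theorem~\ref{app:thm:KE_phi_fwd}, is the only subtle step in an otherwise direct proof.
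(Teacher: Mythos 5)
Your proposal is correct and follows essentially the same route as the paper's proof: take a trace through $\edge[i]$, note it must also contain some output edge $\edge'$ on which $y$ is known by hypothesis, use the \phifunction semantics to identify $y$ with $x_i$ on that trace, and conclude via Definition~\ref{defn:KE}. Your additional remarks (the SSA/backward-solvability justification and the never-exiting corner case, which is vacuous under Definition~\ref{defn:KE}) only make explicit points the paper's shorter proof leaves implicit.
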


\begin{proof}
Consider an arbitrary input edge $\edge[i] \in \inedgeset$. Take any output edge $\edge' \in \outedgeset$ and
suppose $y \in \edgeknowns[\edge'][]$. Suppose further we have a trace $\trace \in \alltraces$ such that
$\edge[i], \edge' \in \trace$. By choice of $\trace$, $y = x_i$. We see that when $\edge[i]$ is traversed, we
are guaranteed to know $y$, and as a result, $x_i$. Thus, by Definition~\ref{defn:KE}, $x \in \edgeknowns[i]$.
\end{proof}

\begin{figure}[t]
\centering
\subfloat[A natural loop.%
\label{fig:natural-loop}]{%
    \includegraphics[width=0.47\linewidth]{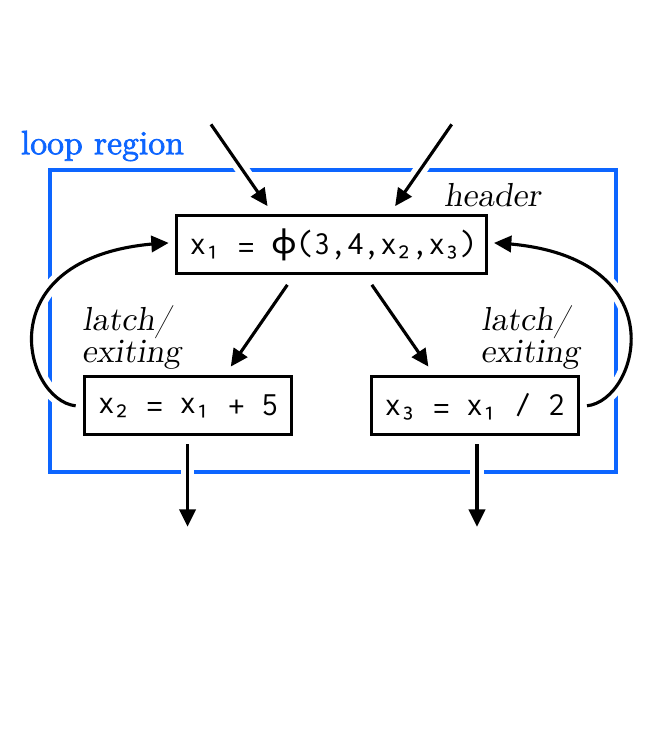}
}
\hfill
\subfloat[The natural loop, ``simplified''.%
\label{fig:simple-loop}]{%
    \includegraphics[width=0.47\linewidth]{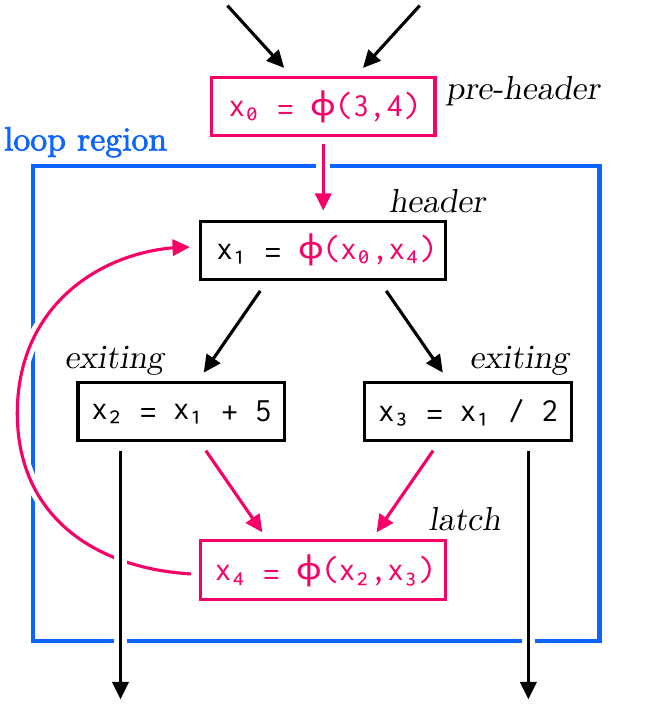}
} \caption{A natural loop (left) is transformed into a simple loop (right) by creating a pre-header node and
merging the two original latch nodes into a single new one. The original \inductivephi is now a two-way
\phifunction. The nodes that were originally exiting blocks remain as such.}
\label{fig:nat-vs-simp-loop}
\end{figure}

\section{Partial Loop Expansion}
\label{app:loop-expansion}

\subsection{The Basic Structure of Loops}
\label{sec:loop-struct}

Generally speaking, a loop is any cycle in the control-flow graph. However, we are primarily concerned with
``natural'' loops \cite{dragon-book}. The \define{natural loop} associated with a back edge
$(\block[j],\block[i])$ is a set of nodes containing $\block[i]$, $\block[j]$, and all nodes from which there
exists a path to $\block[j]$ that does not contain $\block[i]$. The key property of a natural loop is that there
is a single node, the \emph{header}, that dominates all other nodes in the loop. The header node is the
destination of the back edge. If two natural loops have different headers, then either the loops are nested or
are completely disjoint. It is possible for two natural loops to share a header; it's standard to treat such a
case as a single loop with two back edges.

We adopt terminology used by LLVM \cite{llvm-loops}. A \define{latch} is any node that has a back edge
leading to the header of the loop. An \define{exiting} node is one that is inside the loop and has an edge
leading to a node outside the loop. An \define{exit} node is the destination of an exiting node.

We define an \define{\inductivephi} as a \phifunction where at least one of the incoming edges is a back edge.
An \inductivephi selects between two types of values: \circled{1} \define{initial values} --- the initial
versions of a variable that is a constant or another variable from outside the loop; \circled{2}
\define{inductive values} --- the versions of a variable that come from the previous iteration of the loop. An
\define{inductive variable} is any variable that is the output of an \inductivephi. Consider the following loop
in non-SSA form (left) and SSA form (right),

\begin{minipage}[H]{.4\linewidth}
\begin{lstlisting}
    x = 1
    for (...) {
      x++
      print(x)
    }
\end{lstlisting}
\end{minipage}%
\hfill
\begin{minipage}[H]{.55\linewidth}
\begin{lstlisting}
    for (...) {
      $\codevar{x}[1]$ = !$\pink{\straightphi}$(1,$\pink{\codevar{x}[2]}$)!
      $\codevar{x}[2]$ = $\codevar{x}[1]$ + 1
      print($\codevar{x}[2]$)
    }
\end{lstlisting}
\end{minipage}%

The \textcolor{HotPink}{\inductivephi} assigns to the inductive variable \codevar{x}[1]. It selects the initial
value \texttt{1} when the loop is first entered and the inductive value \codevar{x}[2] in every iteration after.

A natural loop may have a header with several predecessors and/or multiple latch nodes. LLVM will attempt to
transform such loops into what we refer to as a \define{simple loop}.\footnotemark\ The first
key property of a simple loop is that the header has only one predecessor, known as the \define{pre-header}. The
second key property is that there is only one back edge into the header \cite{llvm-loops}.
Figure~\ref{fig:nat-vs-simp-loop} shows an example of a natural loop transformed into a simple loop.

In a simple loop, the \inductivephi is always a \emph{two-way} \phifunction. Furthermore, in the first iteration
of the loop, we know the \inductivephi will collapse into a unique initial value while in every subsequent
iteration, the \inductivephi will collapse into a unique inductive value. When dealing with natural loops in
general, things aren't so easy. For example, in Figure~\ref{fig:natural-loop}, we know that the \inductivephi
will return either \texttt{3} or \texttt{4} in the first iteration and either \codevar{x}[2] or \codevar{x}[3]
in subsequent iterations, but we can't conclude anything more than that.

\subsection{Loop Expansion Procedure}

In a nutshell, \define{partial loop expansion} is a transformation which duplicates the loop region. It is a
way to compactly model the dynamic definitions of SSA variables that may occur in loops. The general procedure
is depicted in Figure~\ref{fig:partial-expansion}. Figure~\ref{fig:loop-original} is an abstraction of the
simple loop from Figure~\ref{fig:simple-loop}. The substructure within the loop has been condensed into a single
``super-node'' denoted $\supernode$. Figure~\ref{fig:loop-expanded} shows the partial expansion of the loop.
There are now two copies of the original super-node, denoted $\supernode[1]$ and $\supernode[2]$.
$\supernode[1]$ is identical to $\supernode$ except all \inductivephis have been replaced with assignments of the
\emph{initial} values. $\supernode[2]$ is identical to $\supernode$ except all \inductivephis have been replaced
with assignments of the \emph{inductive} values. Essentially, the path $P\too\supernode[1]\too M\too X_?$ models
the case where a loop is iterated once and the path $P\too\supernode[1]\too\supernode[2]\too M\too X_?$ models
the case where a loop is iterated \emph{at least} twice. Any variable $v$ that is defined in $\supernode$ is
renamed to $\var{v}[][1]$ in $\supernode[1]$ and $\var{v}[][2]$ in $\supernode[2]$. Duplicating the variables
like this allows us to model the dynamic definitions of variables. The new multiple definitions of $v$ need to
be consolidated before reaching any usage of the original variable $v$ by a block after the loop. We can
accomplish this by using a ``merge node'' $M$ that contains the instruction
\texttt{v\;=\;\straightphi(\codevar{v}[][1],\codevar{v}[][2])}.

\begin{figure}[t]
    \centering
    \subfloat[A simple loop.%
    \label{fig:loop-original}]{%
        \includegraphics[height=4.5cm]{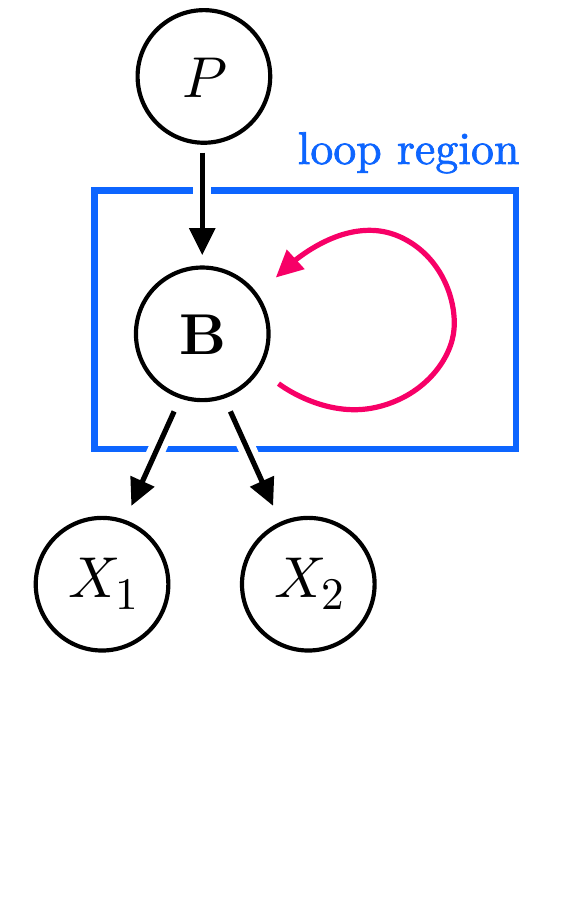}
    }
    \hspace{20pt}
    \subfloat[Its partial expansion.%
    \label{fig:loop-expanded}]{%
        \includegraphics[height=4.5cm]{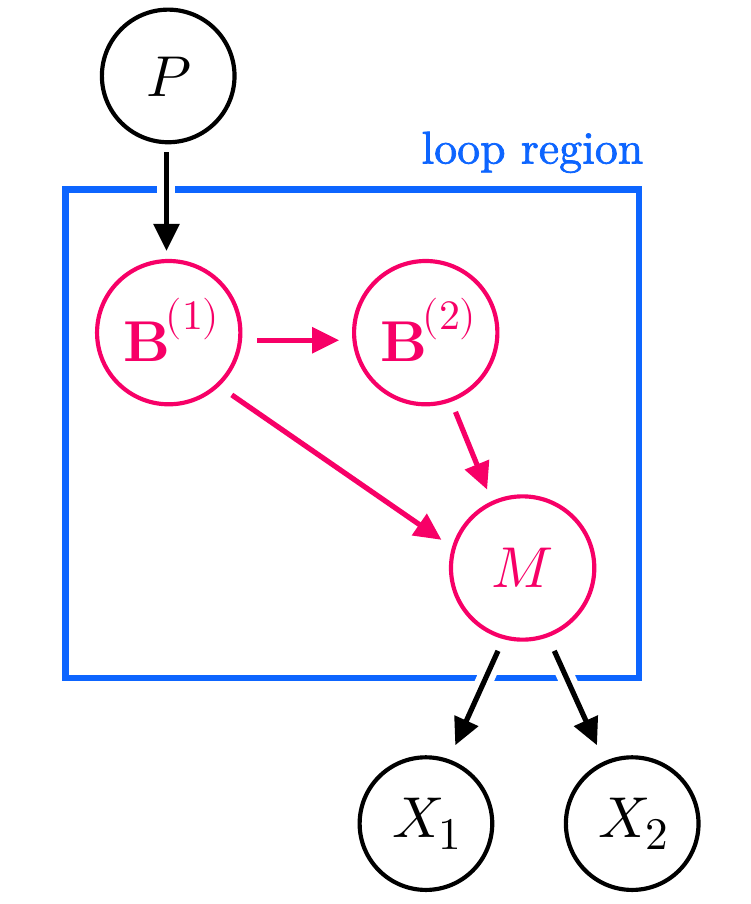}
} \caption{Converting a simple loop into a DAG.\vspace{7pt}}
    \label{fig:partial-expansion}
\end{figure}

\subsection{Example}
\label{sec:loop-example}

\footnotetext{We chose this name because such
loops are the result of the \texttt{LoopSimplify} pass in LLVM; it is not standard LLVM terminology.}

We'll look at the loops in Figure~\ref{app:fig:pita-loops}. They are both identical except in the assignment to
$\codevar{x}[3]$. We use \pinktt{<?>} to represent an unknown value. We'll see that
these two loops have different implications for knowledge and that partial loop expansion is required to treat
them appropriately.

Look at the loop in Figure~\ref{app:fig:pita-loop-1}. If we iterate the loop exactly once, we have that $x_2 =
x_1$, and so \transmit[\codevar{x}[1]] leads to $x_2$ being known. However, if we iterate the loop more than
once, then $x_2$ is actually equal to $x_3$, which is unknown. When an execution traverses $\edge[3]$, whether
$x_2$ is known at that point in time depends on the number of times we iterated the loop. By
Definition~\ref{defn:KE}, we must have $x_2 \notin \edgeknowns[3]$. Now look at
Figure~\ref{app:fig:pita-loop-2}. One can see that the values of \codevar{x}[1], \codevar{x}[2], and
\codevar{x}[3] are all directly connected and thus equivalent in terms of knowledge; so we ought to have $x_2
\in \edgeknowns[3]$. In summary, to achieve both precision \emph{and} soundness, we need $x_2 \notin \dfv[3]$
when analyzing the program in Figure~\ref{app:fig:pita-loop-1} and $x_2 \in \dfv[3]$ when analyzing the program
in Figure~\ref{app:fig:pita-loop-2}.

Figure~\ref{app:fig:fixed-loops} shows the partial expansions of the loops from Figure~\ref{app:fig:pita-loops}.
Looking at the path $\block[1]\too\block[2][1]\too M\too\block[3]$ in both \ref{app:fig:fixed-loop-1} and
\ref{app:fig:fixed-loop-2}, we see there is a direct data-flow from \codevar{x}[1] to \codevar{x}[2] in both
loops. Looking at the path $\block[1]\too\block[2][1]\too\block[2][2]\too M\too\block[3]$, we see there is still
a direct data-flow from \codevar{x}[1] to \codevar{x}[2] in \ref{app:fig:fixed-loop-2}. However, there is no
such data-flow in \ref{app:fig:fixed-loop-1} due to the assignment of \texttt{<?>} to \codevar{x}[3][1].
\codevar{x}[2][2] represents the dynamic definition of \codevar{x}[2] in the second iteration and beyond. Saying
that the data-flow of \codevar{x}[1] doesn't reach \codevar{x}[2][2] is equivalent to saying that after the
first iteration, \codevar{x}[1] and \codevar{x}[2] are no longer connected via data-flow. Running the data-flow
analysis yields $x_2 \notin \dfv[3]$ for Figure~\ref{app:fig:pita-loop-1} and $x_2 \in \dfv[3]$ for
Figure~\ref{app:fig:pita-loop-2}.

This procedure is needed to declassify inductive code, e.g. \texttt{i = 0; while(i < N) \{T(i); i++\}}. This is
crucial for declassifying the benchmarks in Section~\ref{sec:eval}.

\begin{figure}[t]
    \centering
    \subfloat[Potential unsoundness.\label{app:fig:pita-loop-1}]{
        \includegraphics[height=3.2cm]{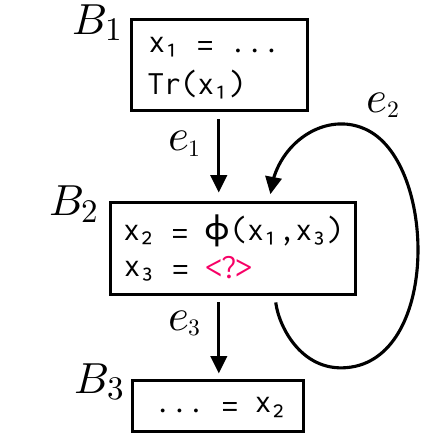}
    }
    \hspace{20pt}
    \subfloat[Potential imprecision.\label{app:fig:pita-loop-2}]{
        \includegraphics[height=3.2cm]{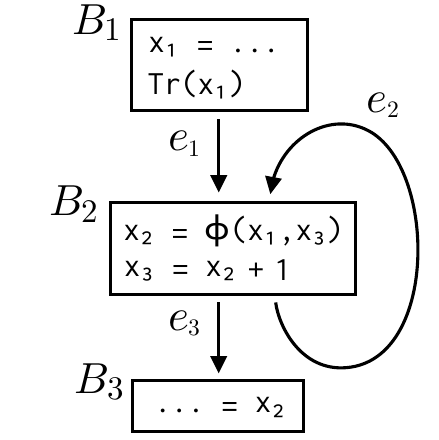}
    }
    \caption{Loops that are problematic for the data-flow analysis unless we use partial loop expansion.
    See Figure~\ref{app:fig:fixed-loops}.\vspace{7pt}}
    \label{app:fig:pita-loops}
\end{figure}

\begin{figure}[b]
    \centering
    \subfloat[Partial expansion of Figure~\ref{app:fig:pita-loop-1}.\label{app:fig:fixed-loop-1}]{
        \includegraphics[height=4.25cm]{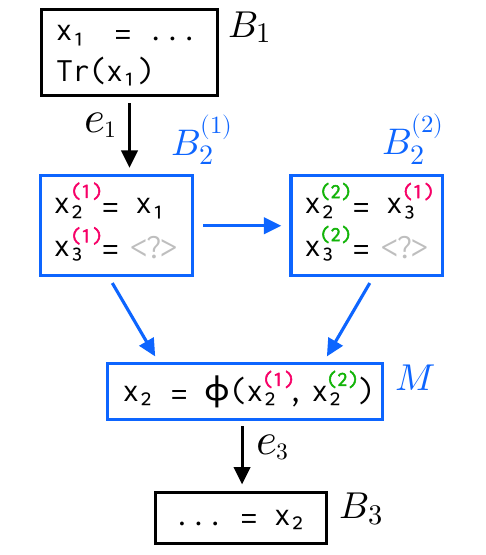}
    }
    \hfill
    \subfloat[Partial expansion of Figure~\ref{app:fig:pita-loop-2}\label{app:fig:fixed-loop-2}]{
        \includegraphics[height=4.25cm]{fixed-loop-2}
    }
    \caption{The partial expansions of the loops from Figure~\ref{app:fig:pita-loops}.}
    \label{app:fig:fixed-loops}
\end{figure}

\end{document}